\documentclass[11pt]{article}

\usepackage{graphicx}%
\usepackage{multirow}%
\usepackage{amsmath,amssymb,amsfonts}%
\usepackage{amsthm}%
\usepackage{mathrsfs}%
\usepackage[title]{appendix}%
\usepackage{xcolor}%
\usepackage{textcomp}%
\usepackage{manyfoot}%
\usepackage{booktabs}%
\usepackage{algorithm}%
\usepackage{algorithmicx}%
\usepackage{algpseudocode}%
\usepackage{listings}%
\usepackage{tabularx}
\usepackage{comment}
\usepackage{tikz}
\usetikzlibrary{positioning}
\usetikzlibrary{shapes.geometric, arrows.meta}
\usepackage{rotating}
\usepackage{float}
\usepackage{stfloats} 
\usepackage{placeins}

\usepackage{caption}
\usepackage{pdflscape}

\usepackage[margin=1in]{geometry}
\usepackage[numbers,sort&compress]{natbib}
\usepackage{hyperref}

\theoremstyle{plain}%
\newtheorem{theorem}{Theorem}%
\newtheorem{lemma}{Lemma}
\newtheorem{corollary}{Corollary}

\theoremstyle{remark}%
\newtheorem{remark}{Remark}%

\theoremstyle{definition}%
\newtheorem{definition}{Definition}%
\newtheorem{property}{Property}

\begin{document}

\title{Generalised Bit-Vector Abstractions for Formal Verification of Quantum Error-Detection and Entanglement Circuits over \texorpdfstring{$\{H,X,\text{C-NOT}\}$}{\{H, X, C-NOT\}}: CSS Constructions, Soundness, and Mutation-Based Validation}

\author{Arun Govindankutty}

\date{}

\maketitle

\begin{abstract}
Formal verification of quantum circuits becomes increasingly difficult as circuit size increases,
since exhaustive reasoning over the underlying Hilbert space is computationally expensive.
In previous work~\cite{bib34}, we introduced a bit-vector abstraction that separates the tracking of superposition from basis-state behaviour,
enabling SMT-based verification of fixed-size error-detection and entanglement circuits constructed from
$H$, $X$, and C-NOT gates. This paper extends that framework in five directions.
First, we generalise the correctness properties to parametrised repetition and
concatenated phase-flip codes with up to 101 logical elements.
Second, we verify syndrome extraction for CSS codes specified by arbitrary parity-check matrices,
including the Steane code, the $[[15,7,3]]$ quantum Hamming code, the $[[15,1,3]]$ quantum Reed-Muller code,
and rotated surface codes with distances up to 101.
The verification considers arbitrary error patterns, error detection up to a specified weight,
and recovery using a classical decoder. Counterexamples are obtained for errors beyond the correction or detection capability of the respective codes.
Third, we characterise the fragment of $\{H,X,\text{C-NOT}\}$ circuits for which the abstraction faithfully represents the corresponding Hilbert-space semantics.
We prove soundness for this fragment, demonstrate the necessity of the stated conditions through counterexamples,
validate the characterisation by differential testing against state-vector simulation,
and prove the six lemmas established in the original work.
Fourth, we conduct a mutation study using 13 mutation operators and automatically identify equivalent mutants.
Within the evaluated fault model, the approach detects all 622 non-equivalent mutants.
Fifth, we implement the approach directly using the Z3 API and verify GHZ circuits with up to 262,144 qubits across three topologies on a 16~GB laptop.
The evaluation includes repeated measurements and a controlled ablation of the encoding pipeline and solver version.
Experiments with random $H$+C-NOT networks show that verification cost increases steeply with circuit depth when outputs depend on a large number of inputs.
The approach is not intended to cover general stabiliser circuits, $Y$-type stabilisers, fault-tolerant protocols, or physical noise. We identify the extensions required to address these cases.
\end{abstract}

\noindent\textbf{Keywords:} Bell state, CSS codes, Entanglement, Formal Verification, GHZ state, Mutation testing, Quantum Circuits, Quantum Error Correction, SMT, Soundness, Steane code, Surface code

\section{Introduction}
\label{sec:intro}

Quantum computing has the potential to outperform classical computing in several application domains, including communication, optimization, machine learning, and materials science~\cite{bib1,bib2,bib3,bib4,bib5}. 
Achieving this potential depends on two fundamental capabilities: the generation and preservation of quantum entanglement across multiple qubits, and the reliable detection and correction of errors arising from noise and decoherence in quantum hardware~\cite{bib6,bib7,bib8,bib9}. 
These capabilities are realized through quantum circuits whose correctness is essential for the reliability of larger quantum algorithms and protocols.
Even a small implementation error, such as an omitted controlled-NOT (C-NOT) gate, an incorrectly placed Hadamard gate, or an improperly connected ancilla qubit, may alter the intended circuit behaviour. 
Such faults can invalidate entanglement generation or compromise error-detection mechanisms, thereby affecting the correctness of subsequent quantum computations.

Verifying the functional correctness of quantum circuits remains challenging because quantum computation differs fundamentally from classical computation.
The state space of an $n$-qubit system grows exponentially with the number of qubits, measurement outcomes are inherently probabilistic and disturb the quantum state, and arbitrary quantum states generally do not admit compact classical representations.
Formal verification techniques provide a systematic alternative to simulation by reasoning about all possible behaviours of a circuit rather than individual executions~\cite{bib10,bib11,bib12}.
Despite this advantage, the scalability of existing verification methods remains limited.
Rewrite-rule-based verification using dyadic arithmetic has been demonstrated for circuits of up to 96 qubits, with scalability constrained by integer-overflow issues~\cite{bib26}.
Decision-diagram-based property checking has been reported for circuits containing up to 10 qubits~\cite{bib27}, while equivalence-checking approaches based on trusted reference circuits have been evaluated for circuits of up to 128 qubits, relying on Hilbert-space representations whenever binary-space reductions are not applicable~\cite{bib28,bib29}.

Our previous work~\cite{bib34} introduced an abstraction framework for a restricted but practically relevant class of quantum circuits composed solely of Hadamard (H), Pauli-X (X), and C-NOT gates. 
Within this gate set, the superposition status of each qubit and its computational-basis behaviour can be represented independently using two compact bit vectors. 
This representation transforms verification from reasoning over an exponentially large Hilbert space into a polynomial-sized satisfiability problem that can be solved using an off-the-shelf SMT solver. 
The approach was evaluated on Bell-state generation, GHZ-state generation, and the 2-qubit, 3-qubit, and Shor 9-qubit error-detection codes. 
In particular, GHZ-state circuits containing up to 8,192 qubits were verified in less than three minutes.

Although the initial study demonstrated the feasibility of the abstraction, several important questions remained unanswered. 
First, it was unclear whether the abstraction extends beyond the specific code sizes evaluated previously or generalises to arbitrary circuit families.
Second, its applicability to quantum error-correcting codes with more complex parity-check structures, such as CSS codes, had not been investigated.
Third, the correctness lemmas underlying the abstraction were stated without complete proofs, and 
the class of circuits for which the abstractions remain faithful was not characterised. Moreover, the 
effectiveness of the 
properties in detecting implementation faults had not been systematically validated. 
Finally, the extent to which the reported scalability resulted from the abstraction itself, rather than from implementation-specific aspects of the SMT encoding or from the particular circuit structure, required further investigation.

This article addresses these questions and extends our previous work~\cite{bib34}. 
Table~\ref{tab:novelty} gives a detailed comparison between the two studies. 
The main contributions of this article are as follows.

\begin{enumerate}
\item \textbf{Generalisation of correctness properties and circuit families.}
We extend the bit-flip error-detection property from fixed-size repetition codes to arbitrary $n$-qubit repetition codes with $n-1$ syndrome ancillas. 
Similarly, the phase-flip property is extended from fixed instances to arbitrary $k$-block concatenated codes. 
Both circuit families are verified for values of $n$ and $k$ up to 101.\\
\item \textbf{Verification of CSS syndrome extraction.}
We extend the abstraction framework to the $[[7,1,3]]$ Steane code, based on the classical Hamming$(7,4)$ parity-check matrix, and more generally to CSS codes defined by arbitrary parity-check matrices $\mathbf{H}_X$ and $\mathbf{H}_Z$.
Within the restricted $\{H,X,\text{C-NOT}\}$ gate set, we formally verify bit-flip and phase-flip syndrome extraction for the Steane code, the $[[15,7,3]]$ quantum Hamming code, the $[[15,1,3]]$ quantum Reed-Muller code, and rotated surface codes with distances up to 101.\\
\item \textbf{Detection, identification, and recovery.}
Beyond the single-error properties considered in~\cite{bib34}, we verify syndrome extraction for arbitrary error patterns, error detection up to a specified error weight, and recovery using a classical decoder followed by a classically controlled correction.
For each code, we also check the expected boundary cases. 
The positive case is verified, while the negative case produces an undetectable or incorrectly corrected error pattern from the solver.\\
\item \textbf{Formal soundness and complete proofs.}
We define a fragment $\mathcal{F}$ of $\{H,X,\text{C-NOT}\}$ circuits for which the abstraction is faithful to the corresponding Hilbert-space semantics.
We prove soundness on this fragment using an explicit invariant (Theorem~\ref{thm:soundness}) and provide counterexamples showing that each condition defining $\mathcal{F}$ is necessary.
The result is further checked through differential testing against exact state-vector simulation.
We also provide complete proofs of Lemmas~1-6 from~\cite{bib34}, which were previously stated without proof. 
In this article, these lemmas are presented as exact characterisations of the modifications detected by the abstraction.\\
\item \textbf{Empirical validation through fault injection.}
We develop a mutation-testing framework covering 12 circuit families and using 13 mutation operators.
These operators include missing, additional, re-wired, reordered, and control/target-exchanged C-NOT gates, as well as missing and additional Hadamard gates and incorrectly initialised ancillas.
Equivalent mutants are identified automatically using an SMT miter and excluded from the evaluation.
Within the evaluated fault model, all 622 non-equivalent mutants are detected.\\
\item \textbf{Improved implementation and scalability.}
We reimplement the verification framework using the Z3 Python API instead of the SMT-LIB text-generation pipeline used previously.
With this implementation, GHZ-state circuits containing up to 262,144 qubits are verified on a 16~GB laptop for three circuit topologies.
The evaluation includes repeated measurements and records memory usage for each instance.
A controlled ablation using the same hardware indicates that neither the encoding pipeline nor the solver version accounts for the improvement over~\cite{bib34}.
Experiments with random H+C-NOT networks also show that verification becomes increasingly expensive as circuit depth increases.\\
\item \textbf{Scope and limitations of the abstraction.}
We identify the classes of stabilizer codes that can be handled by the proposed abstraction and those that remain outside its current scope.
In particular, we discuss non-CSS stabilizer codes that require genuine $Y$-type stabilizers, including the 5-qubit perfect code, as well as fault-tolerance issues such as hook errors.
We also outline a possible roadmap for extending the framework to these cases.
\end{enumerate}

\begin{table}[!h]
\centering
\caption{Contributions of the preliminary conference paper~\cite{bib34} and of this article.}
\label{tab:novelty}
\footnotesize
\begin{tabularx}{\textwidth}{
>{\raggedright\arraybackslash}p{1.95cm}
>{\raggedright\arraybackslash}X
>{\raggedright\arraybackslash}X
}
\toprule
Aspect & Preliminary work~\cite{bib34} & This article \\
\midrule
Abstraction semantics & Definitions~\ref{def:qubit}-\ref{def:singular} & Unchanged gate semantics. Saturating tag explained, two readings of $q_0$ made explicit, fragment $\mathcal{F}$ defined (Def.~\ref{def:fragment}), classically controlled X for recovery (Def.~\ref{def:ccx}) \\
Correctness properties & Properties~\ref{prop:1}-\ref{prop:3} for fixed sizes & Parametrised Properties~\ref{prop:genrep},~\ref{prop:genshor}; CSS Properties~\ref{prop:steane},~\ref{prop:css}; multi-error, detection and recovery Properties~\ref{prop:anyerr}-\ref{prop:recover} \\
Circuit families & 2-, 3-qubit, Shor 9-qubit codes; Bell; GHZ (star) & $n$-qubit repetition and $k$-block concatenated codes ($n,k\le101$); Steane, $[[15,7,3]]$, $[[15,1,3]]$, surface codes; GHZ with star, chain and tree topologies; random H+C-NOT networks \\
Soundness & Not stated & Theorem~\ref{thm:soundness} on $\mathcal{F}$ with invariant-based proof, necessity counterexamples, differential validation \\
Lemmas 1-6 & Stated without proof & Proved; restated as exact characterisations \\
Mutation testing & None & 13 operators, 12 families, automatic equivalent-mutant identification \\
Implementation & SMT-LIB text generation & Direct Z3 Python API; controlled pipeline and solver-version ablation \\
Evaluation platform & 24-core workstation, 192,GB & Apple M1 laptop, 16,GB \\
Largest GHZ instance & 8,192 qubits & 262,144 qubits (three topologies) \\
\bottomrule
\end{tabularx}
\end{table}

The remainder of the article is organised as follows.
Section~\ref{sec:theory} reviews the background on quantum computing, formal verification, and the classical coding theory needed for CSS codes. 
It also establishes the terminology used for error detection and correction.
Section~\ref{sec:relwrk} discusses related work.
Section~\ref{sec:absfrmwrk} presents the extended abstraction framework, together with the new correctness properties and their proofs.
Section~\ref{sec:methodology} describes the verification methodology and the mutation-testing framework.
Section~\ref{sec:results} presents the experimental results.
Section~\ref{sec:discussion} discusses the applicability, limitations, and threats to validity of the approach.
Finally, Section~\ref{sec8} concludes the article and outlines directions for extending the framework towards the verification of general stabiliser codes.

\section{Theoretical Background}\label{sec:theory}

This section provides the quantum-computing concepts required for the remainder of the article. 
More comprehensive treatments of quantum computation can be found in~\cite{bib13,bib14}.

A quantum bit (qubit) is represented by a unit vector in a two-dimensional complex Hilbert space. Its state can be written as
\[
|\psi\rangle=\alpha|0\rangle+\beta|1\rangle,
\]
where
\[
|0\rangle=\binom{1}{0}, \ \
|1\rangle=\binom{0}{1},
\]
$\alpha,\beta\in\mathbb{C}$, and $|\alpha|^2+|\beta|^2=1$. An $n$-qubit system is described by a state vector in a $2^n$-dimensional Hilbert space,

\begin{equation}
|\Psi\rangle=\sum_{x\in\{0,1\}^n}\alpha_x|x\rangle,
\label{eq:state}
\end{equation}
where the probability of observing basis state $x$ after measurement is given by $\Pr(x)=|\alpha_x|^2$. 
Quantum computation is performed by applying unitary operators, referred to as quantum gates, to the quantum state. 
A quantum circuit is an ordered sequence of such gates that transforms the input state into the desired output state~\cite{bib15}. 
Since the dimension of the state space increases exponentially with the number of qubits, direct analysis of quantum circuits rapidly becomes computationally expensive, making scalable verification particularly challenging.

\subsection{Entanglement: Bell and GHZ states}

Entanglement is one of the distinguishing features of quantum computation. 
A multi-qubit state is said to be entangled if it cannot be expressed as the tensor product of the states of its individual qubits. 
The Bell state,
\begin{equation}
|\Phi^+\rangle=
\frac{1}{\sqrt2}\left(|00\rangle+|11\rangle\right),
\label{eq:bell}
\end{equation}
provides the simplest example of a maximally entangled two-qubit state. 
It is generated by applying a Hadamard gate to one qubit followed by a C-NOT gate controlled by that qubit.

The Greenberger-Horne-Zeilinger (GHZ) state extends this construction to multiple qubits~\cite{bib16},
\begin{equation}
|\mathrm{GHZ}\rangle=
\frac{1}{\sqrt2}
\left(|0\rangle^{\otimes m}+|1\rangle^{\otimes m}\right),
\label{eq:ghz}
\end{equation}
where $m>2$. 
A GHZ state is created by applying a Hadamard gate to a single control qubit followed by a sequence of $m-1$ C-NOT gates that entangle the remaining qubits with the control. 
Because entangled states cannot be decomposed into independent single-qubit states, they present additional challenges for functional verification that are absent in classical circuits~\cite{bib16}.

\subsection{Quantum errors and repetition codes}
\label{subsec:rep_codes}
Current quantum hardware is susceptible to errors originating from imperfect gate operations, decoherence, leakage, and measurement noise~\cite{bib17,bib18}. 
Quantum error correction mitigates these effects by encoding logical information across multiple physical qubits.

The simplest encoding strategy is based on the classical repetition code. 
A logical qubit is redundantly represented by multiple physical qubits, for example,
\[
|0\rangle_L\mapsto|00\rangle,\qquad
|1\rangle_L\mapsto|11\rangle,
\]
for the two-qubit code, and
\[
|0\rangle_L\mapsto|000\rangle,\qquad
|1\rangle_L\mapsto|111\rangle,
\]
for the three-qubit code~\cite{bib19,bib20}. 
Bit-flip errors are detected by computing parity checks between selected pairs of physical qubits using ancilla qubits and C-NOT gates without directly measuring the encoded logical state. 
Table~\ref{tab:syndrome} summarizes the syndrome values for the three-qubit repetition code, where ancilla qubits $q_{1A}$ and $q_{2A}$ compute the parity values $q_0\oplus q_1$ and $q_0\oplus q_2$, respectively.

\begin{table}[h]
\centering
\caption{Syndrome table for the 3-qubit bit-flip code (ancilla value $q_{2A}q_{1A}$).}
\label{tab:syndrome}
\begin{tabular}{ccc}
\toprule
Ancilla value & Qubit state & Inference \\
\midrule
00 & $|000\rangle+|111\rangle$ & No error \\
11 & $|001\rangle+|110\rangle$ & 1st qubit flip \\
01 & $|010\rangle+|101\rangle$ & 2nd qubit flip \\
10 & $|100\rangle+|011\rangle$ & 3rd qubit flip \\
\bottomrule
\end{tabular}
\end{table}

The Hadamard and C-NOT gates used throughout this work are represented by
\begin{equation}
H=
\frac{1}{\sqrt2}
\begin{pmatrix}
1&1\\
1&-1
\end{pmatrix},
\ \
\text{C-NOT}=
\begin{pmatrix}
1&0&0&0\\
0&1&0&0\\
0&0&0&1\\
0&0&1&0
\end{pmatrix}.
\label{eq:gates}
\end{equation}

The Shor nine-qubit code~\cite{bib6} extends the repetition-code principle by combining bit-flip and phase-flip protection. Three repetition-code blocks are first transformed into the Hadamard basis, after which parity checks are performed across the blocks using two additional ancilla qubits. The qubits are then transformed back to the computational basis. A single phase-flip error within any block produces a unique two-bit syndrome analogous to that of the three-qubit repetition code, but at the level of encoded blocks rather than individual qubits.

\subsection{Classical linear codes and the CSS construction}
\label{sec:cssbg}

The repetition codes considered in Section~\ref{subsec:rep_codes} are examples of a broader class of quantum error-correcting codes constructed from classical binary linear codes. 
The Calderbank-Shor-Steane (CSS) construction was introduced by Calderbank and Shor~\cite{bib35} and, independently, by Steane~\cite{bib36}~\cite{bib37}. 
It is the only part of classical coding theory required in this work.

A classical binary linear code $\mathcal{C}\subseteq\{0,1\}^N$ with dimension $K_c$ can be represented by a parity-check matrix $\mathbf{H}\in\{0,1\}^{r\times N}$ such that
$\mathcal{C}=\ker\mathbf{H}$ over GF(2).
If a codeword $x\in\mathcal{C}$ is affected by an error pattern $\mathbf{e}$, the received word is $x\oplus\mathbf{e}$.
Its syndrome is
\[
\mathbf{H}(x\oplus\mathbf{e})=\mathbf{H}\mathbf{e}\pmod 2,
\]
because $\mathbf{H}x=0$.
Thus, the syndrome depends only on the error pattern.
For example, a single error on bit $j$ produces the $j^{th}$ column of $\mathbf{H}$ as its syndrome.
If all columns of $\mathbf{H}$ are distinct and non-zero, as for the Hamming$(7,4)$ code, each single-bit error has a different syndrome.
More generally, a code with minimum distance $d$ detects every non-zero error of weight at most $d-1$ and uniquely corrects every error of weight at most $\lfloor(d-1)/2\rfloor$.

A CSS code on $N$ qubits is described by two parity-check matrices, $\mathbf{H}_X$ and $\mathbf{H}_Z$.
Their rows specify the supports of the $X$-type and $Z$-type stabilizer generators, respectively.
For the stabilizers to commute, the matrices must satisfy
\begin{equation}
\mathbf{H}_X \mathbf{H}_Z^{\mathsf{T}}=0 \pmod 2 .
\label{eq:csscond}
\end{equation}
The number of logical qubits encoded by the code is
\[
K=N-\mathrm{rank}\ \mathbf{H}_X-\mathrm{rank}\ \mathbf{H}_Z.
\]
The logical computational-basis states can be viewed as superpositions of classical words $x\in\ker\mathbf{H}_Z$.
After applying the Hadamard transform, the corresponding description uses words $u\in\ker\mathbf{H}_X$.

This construction gives two classical parity-checking problems within the quantum code.
An $X$-type, or bit-flip, error pattern $\mathbf{e}$ is detected by the $Z$-type checks, which produce the syndrome $\mathbf{H}_Z\mathbf{e}$.
Similarly, a $Z$-type, or phase-flip, error acts as a bit flip in the Hadamard basis and is detected by the $X$-type checks, giving the syndrome $\mathbf{H}_X\mathbf{e}$.
In both cases, C-NOT gates accumulate the parity of the checked data qubits onto syndrome ancillas.
This is the same parity-accumulation mechanism used by the repetition codes and captured by the bit-vector abstraction considered in this article.

For the Steane code, $\mathbf{H}_X=\mathbf{H}_Z=\mathbf{H}$, where $\mathbf{H}$ is the parity-check matrix of the Hamming$(7,4)$ code.
Equation~\eqref{eq:csscond} is satisfied because the Hamming code contains its dual.
The other codes considered in Section~\ref{sec:results} include the $[[15,7,3]]$ quantum Hamming code~\cite{bib42}, the $[[15,1,3]]$ quantum Reed-Muller code~\cite{bib43}, for which $\mathbf{H}_X\neq\mathbf{H}_Z$, and rotated surface codes~\cite{bib44,bib45}.

Not all stabilizer codes have the CSS form.
For example, the five-qubit perfect code~\cite{bib46} contains stabilizer generators with $Y$-type factors and therefore cannot be represented by two separate binary parity-check matrices for the $X$ and $Z$ components.
Such codes are naturally described using additive codes over GF(4), or equivalently as self-orthogonal subspaces of $\mathrm{GF}(2)^{2N}$ under the symplectic inner product~\cite{bib47,bib37}.
Capturing this additional algebraic structure would be necessary to extend the abstraction to general stabilizer codes, as discussed in Section~\ref{sec8}.

\subsection{Error detection, identification, decoding, recovery, and fault tolerance}
\label{sec:detcorr}
The terms used for error handling can have different meanings, so we use the following definitions throughout the article.
\emph{Error detection} means that every error pattern in the specified class produces a non-zero syndrome.
\emph{Error identification} means that the measured syndrome uniquely determines the error pattern, up to multiplication by a stabilizer, within the considered class.
\emph{Decoding} is the classical procedure that maps a measured syndrome to a correction.
\emph{Recovery} is the subsequent application of this correction to the data qubits.
Thus, in this article, \emph{error correction} refers to the complete process of syndrome extraction, decoding, and recovery.

\emph{Fault tolerance} is a stronger requirement.
It requires the syndrome-extraction and recovery circuits themselves to tolerate faults, 
such as errors on ancilla qubits or C-NOT gates, without allowing these faults to spread into uncorrectable errors on the data qubits~\cite{bib6,bib48}.
The properties introduced in~\cite{bib34} establish correct syndrome extraction and identification for single errors.
Section~\ref{sec:multierr} extends these properties to arbitrary error patterns, detection up to a specified error weight, and decoding followed by recovery.
Fault tolerance is not addressed by the present abstraction and remains outside its scope, as discussed in Section~\ref{sec:discussion}.

\subsection{Formal verification}
Formal verification establishes system correctness by proving that a design satisfies its specification over all possible behaviours rather than validating selected execution scenarios. 
It has been widely adopted for hardware and software systems~\cite{bib22,bib23,bib24}. 
Extending formal verification techniques to quantum circuits remains an active area of research because of the exponential growth of the quantum state space and the probabilistic nature of quantum measurement~\cite{bib11,bib25}. 
The work presented in this article builds on the abstraction framework introduced in~\cite{bib34}, which reduces verification of quantum circuits composed of Hadamard, Pauli-X, and C-NOT gates to a bit-vector satisfiability problem that can be solved using an SMT solver~\cite{bib33}. 
The following sections extend this framework to generalized circuit families and CSS quantum error-correcting codes, and evaluate its effectiveness through formal verification and mutation-based validation.

\section{Related Work}\label{sec:relwrk}

Formal verification of quantum circuits has attracted increasing attention as quantum hardware and algorithms continue to advance. 
Existing verification techniques differ in their underlying representations and verification strategies, resulting in different trade-offs between expressiveness and scalability. 
The approaches most closely related to this work can be grouped into rewrite-rule-based verification, decision-diagram methods, equivalence checking, symbolic abstractions, and quantum program logics.

Rewrite-rule-based verification has been investigated by Amy~\cite{bib26}, who proposed a framework based on dyadic arithmetic and rewrite rules implemented within a Haskell theorem prover. 
The method supports verification of universal quantum circuits but relies on dyadic representations whose arithmetic becomes increasingly difficult to manage as circuit size grows. 
Reported experiments demonstrate verification for circuits containing up to 96 qubits, with larger instances limited by integer-overflow issues.

Decision-diagram representations have also been explored for quantum circuit verification. 
Seiter \emph{et al.}~\cite{bib27} employed multiple-valued decision diagrams to verify properties of quantum circuits, including implementations of Grover's algorithm. 
Although decision diagrams provide compact representations for many classes of circuits, their effectiveness decreases as the amount of entanglement increases, and the reported evaluation is limited to circuits containing up to 10 qubits.

A different line of research focuses on equivalence checking, in which the implementation under verification is compared with a trusted reference circuit.
Burgholzer and Wille~\cite{bib28} and Yamashita and Markov~\cite{bib29} developed equivalence-checking techniques for quantum circuits using this strategy. 
These approaches assume the availability of a correct reference implementation and rely on Hilbert-space representations when binary-space reductions cannot be applied. 
Experimental evaluations reported in these studies include circuits containing up to 128 qubits.

Symbolic abstractions provide an alternative by reducing quantum-circuit verification to reasoning over compact symbolic representations. 
Our earlier work introduced symbolic, rotational, and superposition-based abstractions for quantum circuits represented as bit vectors~\cite{bib30,bib31,bib32,bib39}. 
These abstractions were applied to the verification of the Quantum Fourier Transform, its inverse, quantum data-encoding circuits, and quantum phase-estimation circuits. 
The abstraction framework presented in~\cite{bib34} extended this approach to entanglement-generation circuits and quantum error-detection circuits composed of Hadamard, Pauli-X, and C-NOT gates. 
That study demonstrated the feasibility of the abstraction for Bell-state generation, GHZ-state generation, and the 2-qubit, 3-qubit, and Shor 9-qubit error-detection codes, but it considered only fixed code sizes and presented the underlying correctness lemmas without formal proofs.

A closely related, but different, line of work concerns the efficient \emph{simulation} of stabilizer circuits. 
The Gottesman-Knill theorem~\cite{bib50} shows that circuits composed of Clifford gates, including $H$, $X$, and C-NOT, together with computational-basis measurements, 
can be simulated classically in polynomial time. 
The tableau method of Aaronson and Gottesman~\cite{bib40} requires $O(n^2)$ bits for $n$ qubits, with $O(n)$ time per gate and $O(n^2)$ time per measurement. 
Stim~\cite{bib41} provides a highly optimised implementation that is widely used in quantum error-correction studies.
Simulation and the SMT-based verification considered in this article serve different purposes.
A simulator evaluates the behaviour of a circuit for a particular input, or samples its behaviour under sampled noise.
In contrast, an SMT query checks whether a property holds for all admissible inputs and error patterns, or produces a counterexample when it does not.
Error patterns are represented symbolically, so a single query can cover, for example, all $2^N$ error patterns considered by Property~\ref{prop:anyerr}.
The scalability results in Section~\ref{sec:results} should therefore not be interpreted as showing that the proposed abstraction is more efficient than stabilizer simulation.
Stabilizer simulation applies in polynomial time to the full Clifford group, whereas the abstraction considered here is limited to the fragment $\mathcal{F}$ defined in Definition~\ref{def:fragment}.
Instead, the results show that the universally quantified verification queries generated by the abstraction remain tractable for the circuit families evaluated in this study.

Formal reasoning about quantum programs has also been studied from the perspective of program logics. 
Liu \emph{et al.}~\cite{bib10} proposed a quantum Hoare logic for verifying the correctness of quantum algorithms, including amplitude-amplification procedures. 
This line of work is complementary to circuit-level functional verification, as it focuses on proving algorithmic correctness rather than verifying the structural correctness of specific quantum circuit implementations. 
Broader surveys of formal verification techniques for quantum programs and circuits have been presented by Chareton \emph{et al.}~\cite{bib11}, Quist \emph{et al.}~\cite{bib25}, and Govindankutty~\cite{bib12}, highlighting the diversity of existing approaches and the continuing challenges associated with scalable verification.

The work in this article directly extends the abstraction framework introduced in~\cite{bib34}.
Table~\ref{tab:novelty} in Section~\ref{sec:intro} gives a detailed comparison of the material carried over from~\cite{bib34} and the new contributions.
The present work extends the verification methodology to arbitrary repetition-code families and to CSS codes defined by arbitrary parity-check matrices.
It also defines the circuit fragment for which the abstraction is sound and provides a proof of soundness, together with a systematic evaluation based on mutation-based fault injection.
Mutation testing has been used for many years to evaluate verification and testing techniques in software and hardware systems, following the original work of DeMillo, Lipton, and Sayward~\cite{bib38}; a survey is given in~\cite{bib52}.
One challenge in mutation testing is the equivalent-mutant problem: 
a mutant may be functionally identical to the original system and therefore cannot be distinguished from it by any test.
In general, determining equivalence is undecidable~\cite{bib52,bib53}.
For the finite-state abstract circuits considered in this article, however, equivalence is decidable.
We therefore identify equivalent mutants automatically using an SMT miter, as described in Section~\ref{sec:mutationmethod}.

\section{Verification Framework}
\label{sec:absfrmwrk}

The verification framework adopted in this work builds upon the abstraction introduced in our earlier study~\cite{bib34}. 
Sections~\ref{sec:baseabs}, and~\ref{sec:baseprop} briefly summarise the underlying abstraction and its three original correctness properties. 
The remainder of this section presents the extensions introduced in this work: 
 generalised correctness properties for parametrised code families (Section~\ref{sec:genprop}), 
the formalisation of the Steane code, and of general CSS syndrome extraction (Section~\ref{sec:steane}), 
properties for multiple simultaneous errors, detection, and recovery (Section~\ref{sec:multierr}),
and the soundness theorem together with complete proofs of the previously stated lemmas (Sections~\ref{sec:soundness}--\ref{sec:proofs}).

\subsection{Base abstraction}\label{sec:baseabs}
The proposed verification framework represents quantum circuits using a compact bit-vector abstraction. 
Each qubit is modelled by two bit vectors that capture the information required for reasoning about superposition and computational-basis behaviour under the restricted gate set considered in this work.

\begin{definition}[Abstract Qubit]\label{def:qubit}
An abstract qubit $Q$ is represented by the tuple $(s,q)$, where $s$ is a 3-bit vector encoding the qubit's superposition status and $q$ is a 2-bit vector encoding its measurement status and computational-basis value. 
The least significant bit of $q$ stores the basis value, whereas the most significant bit records the measurement status ($0$ for unmeasured and $1$ for measured).
\end{definition}
The abstraction models the action of the Hadamard, Pauli-X, and C-NOT gates directly on these bit-vector representations.

\begin{definition}[Abstract H Gate]\label{def:h}
The abstract Hadamard gate is defined as
\[
H(Q(s,q))=
\begin{cases}
\big((s+b'001)\bmod 2^{3},\ q\big), & \text{if } s\in\{b'000,\dots,b'110\},\\
(b'110,\ q), & \text{if } s=b'111.
\end{cases}
\]
Here, $+$ denotes unsigned integer addition of 3-bit vectors.
Since the first case is used only when $s\le b'110$, the modulo operation does not cause a wrap-around.
The tag therefore records up to seven applications of the Hadamard gate.
The gate changes only the superposition tag $s$ and leaves the basis-value bit $q$ unchanged.
An even value of $s$ represents a qubit that is not in superposition, while an odd value represents a qubit that has undergone an odd number of Hadamard operations and is therefore in superposition.

When $s=b'111$, the tag is saturated rather than allowed to wrap around.
Further Hadamard gates make the tag alternate between $b'110$ and $b'111$.
This preserves the parity of $s$, which is the information used by the semantics in Theorem~\ref{thm:soundness}, regardless of the number of Hadamard gates applied.
It also keeps $s=b'000$ as the unique value representing a qubit to which no Hadamard gate has been applied.
This is important because the correctness properties use $s=b'000$ to express the absence of Hadamard gates.
Without saturation, eight Hadamard applications would produce the same tag as zero applications.
In the benchmark circuits, $s$ never exceeds $b'010$; the value $s=b'011$ occurs only in mutants that contain an additional Hadamard gate.
\end{definition}

\begin{definition}[Abstract X Gate]\label{def:x}
The abstract Pauli-X gate is defined as
\[
X(Q(s,q))=Q(s,q_1\lnot q_0),
\]
where the basis-value bit $q_0$ is complemented, while the superposition vector $s$ and the measurement-status bit $q_1$ remain unchanged.
\end{definition}

\begin{definition}[Abstract C-NOT Gate]\label{def:cnot}
For a control-target pair $(Q_c,Q_t)$,
\[
C(Q_c,Q_t)=
\begin{cases}
(Q_c,Q_t), & \text{if } q_{c0}=0,\\
(Q_c,Q(s_t,q_{t1}\lnot q_{t0})), & \text{if } q_{c0}=1.
\end{cases}
\]
The control qubit is unaffected by the operation. 
The basis-value bit of the target qubit is complemented if and only if the control qubit has basis value 1.
\end{definition}

\paragraph{Two readings of the basis-value bit.}
For a qubit with an even tag, $q_0$ represents its computational-basis value.
For a qubit with an odd tag, the meaning of $q_0$ depends on how the abstraction is being used.
The soundness result in Theorem~\ref{thm:soundness} specifies the conditions under which each interpretation is valid.

In the \emph{branch reading}, a qubit enters superposition when a Hadamard gate is applied to a free input variable and can then participate in C-NOT or $X$ gates.
In this case, $q_0$ represents the basis value of the qubit in each branch of the superposition.
Universal quantification over the free input variable therefore represents both branches.
This is the interpretation used for the Bell and GHZ circuits.

In the \emph{frame reading}, a qubit is placed between two Hadamard gates and does not participate in any other gate.
Here, $q_0$ represents the label of the state in the Hadamard basis $\{|+\rangle,|-\rangle\}$.
Complementing $q_0$ therefore represents a phase-flip ($Z$) error, following the interpretation used in~\cite{bib34}.
This is the interpretation used for phase-flip detection.

The two readings cannot be used interchangeably.
For example, under the frame reading, a physical bit-flip ($X$) leaves the states $|+\rangle$ and $|-\rangle$ unchanged up to a global phase.
However, Definition~\ref{def:x} would complement $q_0$ and therefore represent the operation as a phase flip.
Such a gate is consequently excluded from the circuit fragment for which Theorem~\ref{thm:soundness} is established (Definition~\ref{def:fragment}).
The corresponding counterexample is given in Remark~\ref{rem:cex}.

In several correctness properties presented later in this section, it is necessary to determine whether a bit vector contains exactly one asserted bit. This is captured by the following auxiliary function.

\begin{definition}[Singular Function]\label{def:singular}
For a bit vector $m=(m_0,\dots,m_{n-1})$, the singular function is defined as
\[
S(m)=1
\]
if and only if exactly one element $m_i$ is equal to 1. Otherwise,
\[
S(m)=0.
\]
\end{definition}

\subsection{Base correctness properties}\label{sec:baseprop}

The abstraction described in the Section~\ref{sec:baseabs} is used to express the expected functional behaviour of the quantum circuits  considered in this work. 
Following~\cite{bib34}, three correctness properties are defined to capture the behaviour of bit-flip error-detection circuits, phase-flip error-detection circuits, and entanglement-generation circuits, respectively.

\begin{property}[Bit-flip Correctness]\label{prop:1}
Assume that all ancilla and logical qubits are initially in the computational basis, i.e.,
$s^{in}_{Ai}=s^{in}_{Li}=b'000$. The circuit satisfies the bit-flip correctness property if the following conditions hold:
\begin{enumerate}
\item The superposition status of both ancilla and logical qubits is preserved throughout the circuit, such that
$s^{out}_{Ai}=s^{out}_{Li}=b'000$.
\item Every ancilla qubit is initialized to
$q^{in}_{Ai}=b'00$ and remains unmeasured until the final measurement stage.
\item The measured ancilla values encode the parity relationships
\[
q^{out}_{A0}=b'11 \Leftrightarrow q_{L0_0}\neq q_{L1_0},
\]
and
\[
q^{out}_{A1}=b'11 \Leftrightarrow q_{L0_0}\neq q_{L2_0}.
\]
\end{enumerate}
\end{property}

\begin{property}[Phase-flip Correctness]\label{prop:2}
Assume that all ancilla and logical qubits initially satisfy
$s^{in}_{Ai}=s^{in}_{Li}=b'000$. The phase-flip detection circuit is correct if the following conditions are satisfied:

\begin{enumerate}
\item Ancilla qubits remain outside superposition throughout the computation, yielding
$s^{out}_{Ai}=b'000$, while each logical qubit undergoes exactly one Hadamard transformation at the beginning and one at the end of its computational path, resulting in
$s^{1}_{Li}=b'001$ and
$s^{out}_{Li}=b'010$.
\item Every ancilla qubit is initialized to
$b'00$ and remains unmeasured until the final measurement stage.
\item If exactly one of the logical qubits
$q_{L0},\dots,q_{L8}$ is affected by a flip, the resulting two-bit ancilla syndrome uniquely identifies the logical block containing the error, using the same syndrome mapping as Property~\ref{prop:1} applied at the block level.
\end{enumerate}
\end{property}

\begin{property}[Entanglement Correctness]\label{prop:3}
Consider a quantum system
$Q=\{Q_1,\dots,Q_m\}$. The circuit correctly generates entanglement if the following conditions hold:
\begin{enumerate}
\item The first qubit ($Q_1$) is initially outside superposition, with
$s^{in}_1=b'000$, and enters superposition after the application of a single Hadamard gate, resulting in
$s^1_1=b'001$.
\item No Hadamard gate is applied to any remaining qubit, and therefore the superposition tag of each qubit
$Q_i$, $i>1$, remains
$b'000$ throughout the circuit.
\item For
$q^{in}_1\in{b'00,b'01}$ and
$q^{in}_i=b'00$ for all
$i\neq1$, the circuit produces identical output basis values for all qubits, namely
\[
q^{out}_1=q^{out}_2=\cdots=q^{out}_m.
\]
\end{enumerate}
\end{property}

\subsection{Generalised properties for parametrised code families}\label{sec:genprop}

The correctness properties introduced above describe fixed circuit instances corresponding to the 2-qubit, 3-qubit, and 9-qubit codes studied in~\cite{bib34}. 
To support verification of scalable circuit families, these properties are extended here to parametrised repetition and concatenated-code constructions. 
Their verification results are presented in Section~\ref{sec:results}.

\begin{definition}[Generalised $n$-qubit repetition code]\label{def:genrep}
Consider an $n$-qubit repetition code, where $n\geq2$, with data qubits
$q_0,\dots,q_{n-1}$ sharing a common pre-error basis value $c$ and containing at most one bit-flip error $e_k$ on qubit $k$ ($0\leq k\leq n-1$). 
Define $n-1$ ancilla qubits
$a_i$ ($0\leq i\leq n-2$), where each ancilla computes the parity
\[
a_i\leftarrow q_i\oplus q_{i+1},
\]
using two C-NOT operations as specified in Definition~\ref{def:cnot}.
For $n=3$ this uses adjacent pairs, $(q_0\oplus q_1,\,q_1\oplus q_2)$, rather than the pairs $(q_0\oplus q_1,\,q_0\oplus q_2)$ of Table~\ref{tab:syndrome}. Both choices identify every single error uniquely.
\end{definition}

\begin{property}[Generalised Bit-flip Correctness]\label{prop:genrep}
For the generalised repetition code defined in Definition~\ref{def:genrep}, assuming at most one bit-flip error, the following conditions must hold:

\begin{enumerate}
\item No Hadamard gate is applied within the detection circuit, and therefore every superposition tag remains
$b'000$ throughout the computation.
\item All ancilla qubits are initialised to
$b'00$ and are measured only after syndrome computation has completed.
\item An ancilla value satisfies
$a_i=1$ if and only if the error occurs at position
$k\in\{i,i+1\}$; otherwise,
$a_i=0$. 
Consequently, the syndrome forms a single domain wall centred on the error location, 
extending the syndrome behaviour shown in Table~\ref{tab:syndrome} for the case
$n=3$ to arbitrary values of $n$.
\end{enumerate}
\end{property}

\begin{definition}[Generalised $k$-block concatenated code]\label{def:genshor}
Consider a concatenated code with $k\geq3$ logical blocks.
Each block is represented by one abstract block-qubit carrying the common Hadamard-basis label $c$ of the code word.
Each block-qubit is first acted on by a Hadamard gate, may then undergo a phase-flip error, represented as an abstract bit flip when its superposition tag is $b'001$, following~\cite{bib34}, and is finally returned to the computational basis by a second Hadamard gate.
The circuit then uses $k-1$ ancilla qubits to compute the parity between adjacent blocks:
\[
a_i\leftarrow b_i\oplus b_{i+1},\qquad 0\leq i\leq k-2.
\]
Each parity is computed using two C-NOT gates (Definition~\ref{def:cnot}).
Thus, the resulting syndrome is the repetition-code syndrome from Definition~\ref{def:genrep}, applied at the block level.
\end{definition}

\begin{property}[Generalised Phase-flip Correctness]\label{prop:genshor}
For the generalised concatenated code in Definition~\ref{def:genshor}, assume that at most one logical block contains an error.
The following conditions then hold:\\
\begin{enumerate}
\item Each block-qubit undergoes exactly two Hadamard operations and therefore has superposition tag $b'010$ at the end of the circuit.
The ancilla qubits retain the tag $s=b'000$.\\
\item All ancilla qubits are initialised to $b'00$ and are measured after the parity checks have been computed.\\
\item The syndrome bit satisfies
\[
a_i=1
\quad\Longleftrightarrow\quad
\text{block $i$ or block $i+1$ contains the error}.
\]
Thus, the syndrome is all zero when there is no error and uniquely identifies the erroneous block when exactly one block is affected.
\end{enumerate}
\end{property}

The phase-flip property for the 9-qubit Shor code in Property~\ref{prop:2} is a special case of Property~\ref{prop:genshor} with $k=3$.
In this case, the circuit has two ancilla qubits, giving the same syndrome representation as in~\cite{bib34}.

\subsection{The Steane $[[7,1,3]]$ CSS code} and general CSS syndrome extraction
\label{sec:steane}

The generalised repetition and concatenated-code properties introduced in Section~\ref{sec:genprop} extend the verification framework to 
parametrised circuit families while retaining the underlying repetition-code structure. 
This subsection further extends the framework by formalising the $[[7,1,3]]$ Steane code, 
a Calderbank-Shor-Steane (CSS) quantum error-correcting code, within the existing $\{H,X,\text{C-NOT}\}$ abstraction, and then the construction of arbitrary CSS parity-check matrices.

Let $\mathbf{H}\in\{0,1\}^{3\times7}$ denote the parity-check matrix of the classical Hamming$(7,4)$ code. 
Column $i+1$, corresponding to qubit $i$ ($i=0,\ldots,6$), is given by the 3-bit binary representation of $i+1$:

\begin{equation}
\mathbf{H}=
\begin{pmatrix}
0&0&0&1&1&1&1\\
0&1&1&0&0&1&1\\
1&0&1&0&1&0&1
\end{pmatrix},
\label{eq:hamming}
\end{equation}
where the rows represent the three parity checks, indexed by
$j\in\{0,1,2\}$ from the bottom row (least significant bit of the column oindex) to the top. 
For each parity check, the participating qubits are defined by
$\mathrm{row}_j=\{i:\mathbf{H}_{j,i}=1\}$, yielding
$\mathrm{row}_0=\{0,2,4,6\}$,
$\mathrm{row}_1=\{1,2,5,6\}$, and
$\mathrm{row}_2=\{3,4,5,6\}$.

\begin{definition}[Steane bit-flip ancilla]\label{def:steanebf}
Consider data qubits $q_0,\ldots,q_6$ satisfying the valid-codeword condition
\[
\bigoplus_{i\ \in \ \mathrm{row}_j}q_{i0}=0,
\]
for every parity check $j$, and assume that at most one bit-flip error $e_k$ is present. 
Three ancilla qubits, $b_0$, $b_1$, and $b_2$, are computed according to
\[
b_j\leftarrow\bigoplus_{i \ \in \ \mathrm{row}_j}q_i,
\]
where each parity computation is implemented by a sequence of
$|\mathrm{row}_j|=4$ C-NOT gates (Definition~\ref{def:cnot}). 
This construction follows the same abstraction used in Definition~\ref{def:genrep}, with the parity-check rows of $\mathbf{H}$ replacing the adjacent-qubit parity relations.
\end{definition}

\begin{definition}[Steane phase-flip ancilla]\label{def:steanepf}
The phase-flip detection circuit uses the same construction as Definition~\ref{def:steanebf}.
Each data qubit is first acted on by a Hadamard gate, giving it the superposition tag $s=b'001$.
A phase-flip error is then represented by an abstract bit flip in this representation, following Property~\ref{prop:2} and the frame reading described in Section~\ref{sec:baseabs}.
A second Hadamard gate returns each data qubit to the computational basis, giving $s=b'010$.
The parity checks specified by the rows of $\mathbf{H}$ are then computed using three ancilla qubits.

The relation $HZH=X$ means that a phase flip occurring between the two Hadamard gates is represented as a bit flip after the second Hadamard gate.
The ancilla parities therefore measure the corresponding $X$-type stabilizers of the Steane code.
\end{definition}

\begin{property}[Steane CSS Correctness]\label{prop:steane}
Under Definitions~\ref{def:steanebf} and~\ref{def:steanepf}, 
assume that at most one bit-flip error occurs at qubit position $k$, 
or equivalently that at most one phase-flip error occurs at position $k$.
The resulting three-bit syndrome
$(b_2,b_1,b_0)$ is equal to $0$ in the absence of an error and is equal to the binary representation of $k+1$, 
corresponding to column $k$ of the parity-check matrix $\mathbf{H}$, when qubit $k$ is in error. 
Additionally, the data tags are $b'000$ (bit-flip circuit) or $b'010$ (phase-flip circuit), and the ancillas 
 satisfy clauses~1 and~2 of Property~\ref{prop:genrep}.
\end{property}

Property~\ref{prop:steane} directly reflects the single-error-correction capability of the classical Hamming$(7,4)$ code within the proposed bit-vector abstraction. 
Each ancilla computes the XOR parity of the basis values of the participating data qubits, as defined by Definition~\ref{def:cnot}. 
Since XOR is a linear operation, the syndrome associated with a corrupted codeword can be expressed as the XOR of the syndrome of the original codeword and the syndrome of the error. 
Because a valid codeword has zero syndrome, the resulting syndrome is equal to the corresponding column of $\mathbf{H}$ associated with the erroneous qubit, namely
\[
(\text{syndrome of codeword})\oplus(\text{syndrome of error})
=
0\oplus(\text{column }k\text{ of }\mathbf{H}).
\]

Nothing in Definitions~\ref{def:steanebf} and~\ref{def:steanepf} depends on the particular choice of the Hamming matrix.
For an arbitrary CSS code (Section~\ref{sec:cssbg}), the same construction can be used with $\mathbf{H}_Z$ for bit-flip errors and $\mathbf{H}_X$ for phase-flip errors.

\begin{definition}[CSS syndrome extraction]\label{def:css}
Let $\mathbf{H}_X$ and $\mathbf{H}_Z$ satisfy Eq.~\eqref{eq:csscond}.
For the \emph{bit-flip circuit}, the data labels $v\in\{0,1\}^N$ satisfy $\mathbf{H}_Z v=0$.
A bit-flip error pattern $\mathbf{e}$ is then introduced using the abstract $X$ gate.
For each row of $\mathbf{H}_Z$, one ancilla is used to accumulate the corresponding parity through $|\mathrm{row}_j|$ C-NOT gates.

For the \emph{phase-flip circuit}, the Hadamard-basis labels satisfy $\mathbf{H}_Xv=0$.
Each data qubit is surrounded by two Hadamard gates, with a possible phase-flip error between them, as in Definition~\ref{def:steanepf}.
One ancilla is used for each row of $\mathbf{H}_X$ to accumulate the corresponding parity.
\end{definition}

\begin{property}[CSS Syndrome Correctness]\label{prop:css}
Consider the circuits in Definition~\ref{def:css}, with $\mathbf{H}$ equal to either $\mathbf{H}_Z$ or $\mathbf{H}_X$, and assume that at most one error occurs.
The measured ancilla vector is then $\mathbf{H}\mathbf{e}$.
The data tags are $b'000$ for the bit-flip circuit and $b'010$ for the phase-flip circuit.
The ancillas also satisfy clauses~1 and~2 of Property~\ref{prop:genrep}.
For the Steane code, this property reduces to Property~\ref{prop:steane}.
\end{property}

\subsection{Multiple errors, detection, and recovery}\label{sec:multierr}
The properties above consider at most one error and show that the resulting syndrome identifies that error.
We also consider multiple errors in order to distinguish the different notions introduced in Section~\ref{sec:detcorr}.
Only the first property below requires no additional gate semantics.

\begin{property}[Syndrome Correctness under Arbitrary Errors]\label{prop:anyerr}
For the repetition, concatenated, and CSS circuits, consider every error pattern $\mathbf{e}\in\{0,1\}^N$, without restricting its weight.
The measured ancilla vector must equal $\mathbf{H}\mathbf{e}$ modulo 2.
The corresponding single-error property must also satisfy clauses~1 and~2, and the data basis values after syndrome extraction must be the same as after error injection.
Thus, the syndrome-extraction circuit must not introduce additional changes to the data.

For the repetition code, the syndrome is
\[
\mathbf{H}\mathbf{e}=(e_i\oplus e_{i+1})_i.
\]
The single-error properties constrain the ancillas and the superposition tags, whereas this property also requires preservation of the data values.
\end{property}

\begin{property}[Detection up to Weight $w$]\label{prop:detect}
Every error pattern with
\[
1\leq|\mathbf{e}|\leq w
\]
must produce a non-zero syndrome.
\end{property}

Recovery requires one additional operation that depends only on the measured syndrome.

\begin{definition}[Classically-controlled X]\label{def:ccx}
Let $\kappa$ be a Boolean condition on the measured ancilla values.
The classically controlled operation $X^{\kappa}$ is defined by
\[
X^{\kappa}(Q(s,q))=
\begin{cases}
X(Q(s,q)), & \text{if }\kappa\text{ holds},\\
Q(s,q), & \text{otherwise}.
\end{cases}
\]
The operation is applied only to qubits with an even tag.
\end{definition}

\begin{property}[Recovery up to Weight $w$]\label{prop:recover}
After syndrome extraction, a classical decoder uses the measured syndrome to determine a correction.
The correction is then applied using the operation defined in Definition~\ref{def:ccx}.
For every error pattern with $|\mathbf{e}|\leq w$, the corrected data labels must match the error-free labels: $c$ for the repetition code and $v$ for the CSS codes.

For the repetition code, we use a minimum-weight decoder.
It reconstructs the error pattern from the domain walls and complements the reconstructed pattern when its weight is greater than $n/2$.
For CSS codes whose check matrix has distinct non-zero columns, we use a look-up decoder that corrects qubit $j$ when the measured syndrome is equal to column $j$ of the check matrix.
For the phase-flip circuit, the correction is applied after the second Hadamard gate.
At this point, an $X$ correction represents a $Z$ correction on the encoded data.
\end{property}

For a code with distance $d$, Property~\ref{prop:detect} is expected to hold for $w=d-1$ and to fail for $w=d$.
Similarly, Property~\ref{prop:recover} is expected to hold for
$w=\lfloor(d-1)/2\rfloor$ and to fail for
$w=\lfloor(d-1)/2\rfloor+1$.
Section~\ref{sec:resmulti} verifies both the positive and negative cases.
For the negative cases, the solver produces an error pattern that is either undetectable or incorrectly corrected.

\subsection{Soundness of the abstraction}
\label{sec:soundness}

The correctness properties introduced above are defined entirely in terms of the proposed bit-vector abstraction.
Before using these properties for verification, we need to establish when the abstraction correctly represents the underlying quantum circuit.
The abstraction is not faithful for every circuit over $\{H,X,\text{C-NOT}\}$ (Remark~\ref{rem:cex}).
We therefore first define the circuit fragment for which the soundness result holds.

\begin{definition}[Fragment $\mathcal{F}$]\label{def:fragment}
Consider a circuit over $\{H,X,\text{C-NOT}\}$ with computational-basis inputs, where phase-flip errors may also be injected.
The circuit belongs to $\mathcal{F}$ if every qubit has one of the following three roles.

\begin{enumerate}
\item \emph{Computational qubits} receive no Hadamard gate.\\
\item \emph{Branch qubits} receive exactly one Hadamard gate, and this gate is applied before the qubit participates in any C-NOT gate.
The input basis value of the qubit is a free variable in the verification query.
After the Hadamard gate, the qubit may participate in arbitrary $X$ and C-NOT gates.\\
\item \emph{Frame qubits} receive an even number of Hadamard gates arranged in consecutive pairs.
Between the two Hadamard gates in each pair, called a \emph{frame segment}, 
the qubit does not participate in any gate other than an injected phase-flip error, represented by complementing $q_0$.
Outside these frame segments, the qubit may participate in arbitrary $X$ and C-NOT gates.
\end{enumerate}

Measurements (Definition~\ref{def:qubit}) and classically controlled $X$ gates (Definition~\ref{def:ccx}) are applied only to qubits with an even tag, and only after all gates acting on those measured qubits have been completed.
\end{definition}

Membership in $\mathcal{F}$ is a syntactic condition and can be checked by scanning the gate list once.
All circuits evaluated in this article belong to $\mathcal{F}$.
The bit-flip repetition and CSS circuits use only computational qubits.
The concatenated and CSS phase-flip circuits use frame qubits with one frame segment.
The Bell and GHZ circuits use one branch qubit.
The random networks in Section~\ref{sec:resrandom} use both branch and computational qubits.

\begin{theorem}[Soundness on $\mathcal{F}$]\label{thm:soundness}
Let $\mathcal{C}\in\mathcal{F}$ be a circuit on $n$ qubits, let $B$ be its set of branch qubits, and let $\widehat{\mathcal{C}}$ be its abstraction according to Definitions~\ref{def:qubit}--\ref{def:cnot}.
For input basis values $x\in\{0,1\}^n$ and an assignment $\mathbf{e}$ of the error indicators, let $y(x,\mathbf{e})\in\{0,1\}^n$ and $s(x,\mathbf{e})$ denote the output basis bits $q_0$ and tags produced by $\widehat{\mathcal{C}}$.
Let $\mathcal{C}_{\mathbf{e}}$ denote the corresponding concrete circuit in which each injected phase-flip error is represented by a $Z$ gate.
Then, before measurement,
\begin{equation}
\mathcal{C}_{\mathbf{e}}|x\rangle
=
2^{-|B|/2}
\!\!\sum_{x'\,:\,x'_i=x_i\ \forall i\notin B}
\!\!
\omega(x')\,\bigl|y(x',\mathbf{e})\bigr\rangle,
\qquad |\omega(x')|=1,
\label{eq:soundness}
\end{equation}
where the $2^{|B|}$ basis states $\bigl|y(x',\mathbf{e})\bigr\rangle$ are pairwise distinct.
Moreover, the tag of qubit $i$ is odd if and only if $i\in B$, and its value equals the number of Hadamard gates applied to that qubit whenever this number is at most seven.
\end{theorem}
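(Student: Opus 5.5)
We prove the statement by induction on the gate sequence of $\mathcal{C}$, maintaining an explicit invariant that relates the concrete state after each prefix of the circuit to the abstract run on every input $x'$ compatible with $x$. The invariant after a prefix $P$ reads
\[
P_{\mathbf{e}}|x\rangle = 2^{-h_P/2}\sum_{x'\,:\,x'_i=x_i\ \forall i\notin B_P} \omega_P(x')\,\bigl|\phi_P(x',\mathbf{e})\bigr\rangle ,
\qquad |\omega_P(x')|=1 .
\]
Here $B_P\subseteq B$ is the set of branch qubits whose Hadamard has already been applied, and $h_P=|B_P|$. The vector $|\phi_P(x',\mathbf{e})\rangle$ is a product state: its factor on a frame qubit currently inside a frame segment is $|\pm\rangle$ with sign label given by the abstract $q_0$. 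Its factor on every other qubit is the computational basis state given by the abstract $q_0$ after running the abstract prefix on $x'$. We also carry two side invariants. The abstract tag of each qubit equals the number of Hadamards applied to it (saturated at $7$). The abstract map $x'\mapsto y_P(x',\mathbf{e})$ restricted to the coordinates in $B_P$ is injective, and in fact is a bijection of $\{0,1\}^{B_P}$ once the other inputs are fixed.

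The base case is the empty prefix, which is trivial. The inductive step splits by gate type and by the role of the qubit.

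\textbf{(i) $X$ or C-NOT on qubits outside any frame segment.} These gates act as permutations of computational basis states. The abstract update of Definitions~\ref{def:x} and~\ref{def:cnot} is exactly that permutation applied to each branch, so the invariant is preserved with unchanged phases. The map $x'\mapsto y$ stays injective, since we compose with a bijection. Fragment $\mathcal{F}$ guarantees that no such gate touches a qubit inside a frame segment.

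\textbf{(ii) The Hadamard of a branch qubit $b$.} At this point $b$ has touched no C-NOT, so by the invariant its factor is $|x'_b\rangle$ possibly composed with $X$s. Writing $H|v\rangle=2^{-1/2}\sum_{u}(-1)^{uv}|u\rangle$, we reindex the new sum by the free variable $x'_b$: we identify $u$ with the abstract $q_0$, which the abstract $H$ leaves unchanged. This yields the factor $2^{-1/2}$, a phase $\pm1$, and enlarges $B_P$ by $b$. Each value of $x'_b$ now contributes a distinct basis state, which preserves injectivity.

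\textbf{(iii) Frame qubits.} The first $H$ of a segment maps $|v\rangle\mapsto|\pm\rangle$ with label $v$. An injected $Z$ flips $|+\rangle\leftrightarrow|-\rangle$, which matches the complement of $q_0$. The closing $H$ returns the qubit to $|v\oplus e\rangle$. The tag arithmetic follows directly from Definition~\ref{def:h}.

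At the end no frame segment is open, because Hadamards on frame qubits come in pairs. Hence $\phi=|y(x',\mathbf{e})\rangle$, which gives Eq.~\eqref{eq:soundness}. Pairwise distinctness of the $2^{|B|}$ basis states is the injectivity invariant. The tag claim also follows: branch qubits have exactly one $H$, frame qubits an even number, and computational qubits none, while saturation preserves parity.

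We expect the main obstacle to be step (ii) together with injectivity. Applying $H$ to a qubit in computational basis state $|v\rangle$ creates the superposition over its own output value $u$. The proof must show that this sum can be identified with the sum over the free input $x'_b$, which requires the $X$s applied before the Hadamard to be absorbed into the phase.

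Injectivity is the other delicate point. Later C-NOTs controlled by branch qubits never destroy it, because the whole abstract evolution restricted to basis bits is an invertible affine map over GF(2). We therefore plan to prove distinctness globally rather than stepwise. The map $x'\mapsto y(x',\mathbf{e})$ is a composition of invertible affine maps on $\{0,1\}^n$, so it is injective on all of $\{0,1\}^n$, and in particular on the $2^{|B|}$ inputs that agree with $x$ outside $B$.
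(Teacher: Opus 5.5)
Your proposal is correct and follows essentially the same route as the paper. It uses induction over gate prefixes with the same invariant: open frame qubits are written in the Hadamard basis with label $q_0$, and a branch variable $x'_b$ is freed exactly when its Hadamard fires, with the pre-Hadamard $X$ parity absorbed into the reindexing and the phase. It also uses the same case split into $X$/$\text{C-NOT}$ outside frames, the branch Hadamard, and frame open/$Z$/close. The one real variation is that you obtain pairwise distinctness globally, because the abstract basis-bit map is a composition of invertible affine maps over GF(2), whereas the paper argues it stepwise. That is cleaner and decouples distinctness from the induction. However, drop the side claim that the map is ``a bijection of $\{0,1\}^{B_P}$'', since only injectivity into $\{0,1\}^n$ holds: for example, $\text{C-NOT}(b,c)$ followed by $\text{C-NOT}(c,b)$ leaves $y_b=x_c$, independent of $x'_b$.
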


\begin{proof}
Consider a prefix containing the first $t$ gates of the circuit.
Let $B_t\subseteq B$ be the branch qubits whose Hadamard gate has already occurred, and let $O_t$ be the frame qubits that are currently inside an open frame segment.
For $z\in\{0,1\}^n$, let $|z\rangle_{O_t}$ denote the product state in which each qubit $i\in O_t$ is in the Hadamard-basis state $H|z_i\rangle$, while all other qubits are in the computational-basis state $|z_i\rangle$.

We prove by induction on $t$ the invariant
\[
\mathcal{I}(t):
\quad
\mathcal{C}^{(t)}_{\mathbf{e}}|x\rangle
=
2^{-|B_t|/2}
\!\!\sum_{x'\,:\,x'_i=x_i\ \forall i\notin B_t}
\!\!
\omega_t(x')\,|y^{(t)}(x',\mathbf{e})\rangle_{O_t},
\]
where $y^{(t)}$ denotes the abstract basis bits after the prefix.
The terms in the sum are pairwise distinct, and the tag of qubit $i$ is odd exactly when $i\in B_t\cup O_t$.

\emph{Base case.}
For $t=0$, no gates have been applied, so $B_0=O_0=\emptyset$.
All tags are $b'000$, and both sides of the invariant are equal to $|x\rangle$.

\emph{Inductive step.}
Assume that $\mathcal{I}(t)$ holds and consider the next gate.
By linearity, it is enough to consider one term $|z\rangle_{O_t}$, where
$z=y^{(t)}(x',\mathbf{e})$.

\begin{enumerate}
\item \emph{$X$ gate on qubit $i\notin O_t$.}
The gate maps
\[
X|z_i\rangle=|z_i\oplus1\rangle,
\]
which is exactly the update specified by Definition~\ref{def:x}.
The tag is unchanged.

\item \emph{C-NOT with control $c$ and target $u$, both outside $O_t$.}
By Definition~\ref{def:fragment}, a C-NOT cannot act on a qubit inside an open frame segment.
For the remaining qubits,
\[
\mathrm{C\mbox{-}NOT}|z_c,z_u\rangle
=
|z_c,z_u\oplus z_c\rangle,
\]
which agrees with Definition~\ref{def:cnot}.
The $X$ and C-NOT gates only permute basis states, so distinct terms remain distinct.

\item \emph{Hadamard gate on a branch qubit $i$.}
The qubit has not previously participated in a C-NOT gate.
Therefore its value can be written as $z_i=x_i\oplus\beta$, where $\beta$ is the parity of the $X$ gates applied to it.
The variable $x_i$ does not occur in any other component of $z$.
Hence
\[
H|z_i\rangle
=
2^{-1/2}
\sum_{r\in\{0,1\}}
(-1)^{z_ir}|r\rangle.
\]
The abstraction retains $q_0=z_i$, while $x'_i$ ranges over both values.
The resulting sum is therefore represented by the new free variable $x'_i$.
The phase factor $(-1)^{z_ir}$ is absorbed into $\omega_{t+1}$, and
$B_{t+1}=B_t\cup\{i\}$.
The tag becomes odd.

\item \emph{Hadamard gate opening a frame segment on qubit $i$.}
The state changes from
\[
|z_i\rangle\longmapsto H|z_i\rangle.
\]
Thus, qubit $i$ enters $O_{t+1}$ with Hadamard-basis label $z_i$.
The abstraction retains $q_0=z_i$, and the tag becomes odd.

\item \emph{Phase-flip error on $i\in O_t$.}
Using $ZH=HX$, we obtain
\[
ZH|z_i\rangle
=
HX|z_i\rangle
=
H|z_i\oplus1\rangle.
\]
Thus, the phase-flip complements the Hadamard-basis label, which is exactly how the error is represented in the abstraction.

\item \emph{Hadamard gate closing a frame segment on qubit $i$.}
The two Hadamard gates cancel:
\[
HH|z_i\rangle=|z_i\rangle.
\]
The qubit therefore leaves $O_{t+1}$ with computational-basis value $z_i=q_0$, and its tag becomes even.
\end{enumerate}

In every case, the invariant holds for $t+1$.
The statement about the tags follows from Definition~\ref{def:h}: each Hadamard gate increments the tag, while saturation preserves its parity.
At the end of the circuit, all frame segments have been closed, so $O_t=\emptyset$ and $B_t=B$.
This gives Eq.~\eqref{eq:soundness}.
\end{proof}

\begin{corollary}\label{cor:props}
Let $P$ be a property of the abstract components $(s,q)$ of a circuit $\widehat{\mathcal{C}}$, where $\mathcal{C}\in\mathcal{F}$.
Suppose that the SMT query for $\neg P$ is unsatisfiable, so that $P$ holds for every assignment of the free input and error variables allowed by the query preconditions.
Then, for every admissible computational-basis input and error assignment, 
every computational-basis state in the support of the concrete output satisfies the basis-value conditions of $P$.
The tag conditions also hold with the interpretation given by Theorem~\ref{thm:soundness}.

If $P$ fixes a measured ancilla as a function only of the error variables, as is the case for all syndrome properties considered here, the measurement outcome is deterministic and equal to that value.
The measurement therefore does not disturb the encoded superposition.
By linearity, the same result applies to any superposition of admissible basis inputs, including a logical state that is a superposition of code words.
\end{corollary}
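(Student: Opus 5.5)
The plan is to derive the corollary from Theorem~\ref{thm:soundness} by reading off the support of the concrete output state. The argument has three parts: the basis-value conditions, the tag conditions, and the deterministic measurement.

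\emph{Basis-value conditions.} Fix an admissible computational-basis input $x$ and error assignment $\mathbf{e}$. Equation~\eqref{eq:soundness} writes $\mathcal{C}_{\mathbf{e}}|x\rangle$ as a sum of the distinct basis states $|y(x',\mathbf{e})\rangle$. Each coefficient has modulus $2^{-|B|/2}\neq 0$, and the states are pairwise distinct, so no cancellation can occur. The support is therefore exactly $\{y(x',\mathbf{e}) : x'_i=x_i\ \forall i\notin B\}$.

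For each such $x'$, the query preconditions must still admit the pair $(x',\mathbf{e})$. This holds because the preconditions never constrain the input values of branch qubits: those values are free variables by Definition~\ref{def:fragment}. Since the query for $\neg P$ is unsatisfiable, $P$ holds on the abstract outputs $(s(x',\mathbf{e}),y(x',\mathbf{e}))$. So every basis state in the support satisfies the basis-value conditions of $P$.

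\emph{Tag conditions.} These follow directly from the last clause of Theorem~\ref{thm:soundness}. An odd tag marks a branch qubit. A tag of at most seven equals the number of Hadamard gates applied to that qubit.

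\emph{Deterministic measurement.} Suppose $P$ asserts that ancilla $a$ has $q_0=f(\mathbf{e})$. Then every basis state in the support has that coordinate equal to $f(\mathbf{e})$. Hence the output state factorises as $|f(\mathbf{e})\rangle_a\otimes|\phi\rangle$, the Born rule gives outcome $f(\mathbf{e})$ with probability $1$, and the post-measurement state is unchanged.

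For a superposition $\sum_x\alpha_x|x\rangle$ of admissible inputs, I would apply linearity of $\mathcal{C}_{\mathbf{e}}$. The ancilla value $f(\mathbf{e})$ does not depend on $x$, so the ancilla factor is common to all terms. The whole output therefore factorises the same way, and measuring the ancilla leaves the data superposition undisturbed.

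\emph{Main obstacle.} I expect the delicate step to be justifying that every $x'$ ranging over the branch qubits is itself admissible. If a precondition were to constrain a branch input, unsatisfiability of the query would not cover all terms of the support. I would handle this by noting that in $\mathcal{F}$ branch inputs are free variables by definition, and that the code-word preconditions (for example $\mathbf{H}_Zv=0$) involve only computational and frame qubits. A second subtlety is the measurement clause: it needs the measured qubit to have an even tag and no later gates, which Definition~\ref{def:fragment} guarantees. This ensures that $q_0$ is a genuine computational-basis value under the invariant.
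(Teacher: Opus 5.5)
Your proposal is correct and takes essentially the same route as the paper. The paper gives no separate proof: its justification is the reasoning embedded in the corollary's statement, namely reading the support off Eq.~\eqref{eq:soundness}, noting that unsatisfiability of $\neg P$ covers every branch assignment $x'$, observing that a fixed ancilla value factors out so the measurement is deterministic, and extending to superpositions of code words by linearity. Your explicit check that the preconditions leave branch inputs unconstrained, so that every $x'$ in the support is admissible, is exactly what is needed, and it is a point the paper leaves implicit in Definition~\ref{def:fragment}.
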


\begin{remark}[Scope of Theorem~\ref{thm:soundness} and counterexamples]
\label{rem:cex}
The abstraction is sound on $\mathcal{F}$, but it is not complete.
It is also not sound for circuits outside $\mathcal{F}$.
The following four circuits each violate one condition of Definition~\ref{def:fragment} and show why these conditions are required.
For each circuit, the abstraction gives a different result from the exact state-vector simulation, as verified by the differential test in Section~\ref{sec:ressound}.

\begin{itemize}
\item[C1] $H;X;H$ on $|0\rangle$.
Here, a physical bit flip occurs inside a frame segment.
Since $HXH=Z$, the exact output is $|0\rangle$, whereas the abstraction gives $q_0=1$.
This is the composability restriction discussed in Section~\ref{sec:baseabs}.

\item[C2] $H(q_0);\text{C-NOT}(q_0,q_1);H(q_0)$.
Here, a second Hadamard gate is applied after $q_0$ has participated in an entangling C-NOT.
The exact output is a uniform superposition of all four basis states, whereas the abstraction gives $q_0=q_1$ with an even tag on $q_0$.

\item[C3] $\text{C-NOT}(q_0,q_1);H(q_0)$ with $q_0$ free.
The Hadamard gate is applied after the qubit has already interacted with another qubit.
In the exact circuit, $q_0$ is uniformly random and independent of $q_1$.
The abstraction instead makes the branch of $q_0$ depend on the value of $q_1$.

\item[C4] $H(q_0);\text{C-NOT}(q_0,q_1);Z(q_0);H(q_0)$.
Here, a C-NOT acts inside a frame segment.
\end{itemize}

The abstraction is incomplete for two main reasons.
First, it does not represent the relative phases $\omega$ in Eq.~\eqref{eq:soundness}.
Consequently, states such as
$(|00\rangle+|11\rangle)/\sqrt2$
and
$(|00\rangle-|11\rangle)/\sqrt2$
are represented in the same way.
Second, the tags count Hadamard gates.
Therefore, even a functionally redundant pair of consecutive Hadamard gates changes the tag and can be reported as a violation by a property requiring a particular tag value.
This is a conservative false alarm rather than a missed fault.

Supporting phase-sensitive gates such as $S$, and therefore genuine $Y$-type stabilizers, would require the relative phase information $\omega$ to be represented explicitly, as discussed in Section~\ref{sec8}.
\end{remark}

\subsection{Proofs of Lemmas 1-6}
\label{sec:proofs}
The preliminary version of this work~\cite{bib34} stated six lemmas connecting circuit modifications with violations of the corresponding correctness properties, but did not provide their proofs.
The complete proofs are given below.
Lemmas~\ref{lem:3}--\ref{lem:6} are also stated more precisely than in~\cite{bib34}.
A modification that does not change the function computed by the circuit, such as swapping the order of two C-NOT gates acting on the same ancilla or re-wiring a GHZ C-NOT to a control that already carries the GHZ value, is not a functional fault and cannot be detected by a functional property.
The lemmas therefore characterise which modifications are detected by the proposed properties.

All lemmas consider circuits in $\mathcal{F}$ and are proved in the abstract domain.
Corollary~\ref{cor:props} then transfers the results to the corresponding concrete circuits.

We use the following observation throughout the proofs.
In every circuit family considered here, the basis bit of a qubit at any point in a circuit containing only $X$ and C-NOT gates is an \emph{affine} function over GF(2) of the free input and error variables.
Hadamard gates do not change $q$.
For ancilla $j$, this function can be written as
\[
f_j(\cdot)=\alpha_j\cdot d\oplus\gamma_j,
\]
where $d$ is the vector of data basis values after error injection.
For the repetition code, these variables are $c\oplus e_i$, while for CSS codes they are $v\oplus\mathbf{e}$.
The vector $\alpha_j\in\{0,1\}^N$ is the \emph{parity set} of ancilla $j$, namely the set of data qubits whose values reach the ancilla an odd number of times.
The value $\gamma_j$ is the initial value of the ancilla.
In the intended circuit, $\alpha_j$ is the $j$th row of the relevant check matrix and $\gamma_j=0$.

\begin{lemma}\label{lem:1}
If an ancilla is incorrectly initialised or one or more Hadamard gates are applied to an ancilla, then Properties~\ref{prop:1},~\ref{prop:2},~\ref{prop:genrep},~\ref{prop:genshor}, and~\ref{prop:css} are violated.
\end{lemma}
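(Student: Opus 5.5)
The plan is to split the modification into its two kinds, ancilla mis-initialisation and extra Hadamard gates on an ancilla, and in each case show that one specific clause common to all five listed properties fails. Both kinds are handled directly in the abstract domain. Corollary~\ref{cor:props} then transfers the violation to the concrete circuit, as announced before the lemma.

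\emph{Case 1: incorrect initialisation.} Suppose ancilla $j$ starts with $q^{in}_{Aj}\neq b'00$. Then either the measurement bit $q_1$ equals $1$, or $q_1=0$ and $q_0=1$.
\begin{itemize}
\item If $q_1=1$, clause~2 of each property is violated immediately. This holds for Properties~\ref{prop:1},~\ref{prop:2},~\ref{prop:genrep} and~\ref{prop:genshor}, and for Property~\ref{prop:css} through its reference to clauses~1 and~2 of Property~\ref{prop:genrep}. The reason is that the ancilla is required to start at $b'00$ and be unmeasured.
\item If $q_1=0$ and $q_0=1$, clause~2 still fails literally, since the initial value is not $b'00$.
\end{itemize}
To make the violation visible also in the syndrome clause, which is stronger, I would use the affine-form observation. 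Ancilla $j$ computes $f_j=\alpha_j\cdot d\oplus\gamma_j$ with $\gamma_j=1$ and $\alpha_j$ unchanged, because the C-NOT wiring is unmodified. Choose the error-free assignment $\mathbf{e}=0$ with a valid codeword. Then $\alpha_j\cdot d=0$, so the ancilla reads $1$ where clause~3 (or $\mathbf{H}\mathbf{e}=0$) demands $0$. This gives an explicit satisfying assignment of $\neg P$.

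\emph{Case 2: Hadamard gates on an ancilla.} Suppose $m\ge1$ Hadamard gates are applied to ancilla $j$. By Definition~\ref{def:h}, each gate changes the tag $s$. Starting from $b'000$, the tag after $m$ gates equals $m$ for $m\le7$, and for $m>7$ it alternates between $b'110$ and $b'111$. In every case the tag is never $b'000$. This is exactly the purpose of saturation explained after Definition~\ref{def:h}: $b'000$ is the unique tag meaning ``no Hadamard applied.''

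Each listed property requires the output tag of every ancilla to be $b'000$:
\begin{itemize}
\item clause~1 of Properties~\ref{prop:1},~\ref{prop:2} and~\ref{prop:genshor};
\item clause~1 of Property~\ref{prop:genrep}, which Property~\ref{prop:css} inherits.
\end{itemize}
So the property is violated for every input assignment. Hence $\neg P$ is satisfiable and the violation does not depend on any basis values.

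The main obstacle I expect is the precise monotonicity-free argument that the tag never returns to $b'000$. This is an induction on the number of Hadamard gates using the two cases of Definition~\ref{def:h}: from $s\le b'110$ the tag increases by $1$ without wrap-around, so it is nonzero, and from $b'111$ it goes to $b'110$, which is also nonzero. A second point needs care. If the intermediate Hadamard gates sit between C-NOTs, the circuit may leave $\mathcal{F}$. Because the lemma is proved in the abstract domain, however, only the tag computation is needed, and that computation does not depend on membership in $\mathcal{F}$. I would note explicitly that detection here is a statement about the abstract query, and that Corollary~\ref{cor:props} is used only in the direction of reporting the fault.
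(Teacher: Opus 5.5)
Your two cases follow the paper's proof. For mis-initialisation you set $\gamma_j=1$ and use the error-free admissible input to get a non-zero syndrome bit where $0$ is required. For Hadamard gates you use saturation to show the ancilla tag never returns to $b'000$, which violates clause~1 of every listed property. Your literal appeal to clause~2 and the explicit induction on the tag are harmless additions.

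One claim should be removed, though. Corollary~\ref{cor:props} transfers only \emph{validity}, i.e.\ an \texttt{unsat} query for $\neg P$, to the concrete circuit. The abstraction is sound but not complete (Remark~\ref{rem:cex}), so the corollary cannot transfer a violation. Your opening paragraph, and your closing remark that the corollary is used ``in the direction of reporting the fault'', both have the direction backwards.

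In the Hadamard case there is often nothing to transfer. For an even number of adjacent Hadamards, e.g.\ $h=2$, we have $H^2=I$. The concrete circuit then still computes the intended syndrome, and only the tag clause fails, which has no concrete counterpart. The paper's proof records exactly this as a conservative false alarm; your proposal misses this point.

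For odd $h$, e.g.\ the single extra Hadamard before measurement, the mutant is not merely possibly but certainly outside $\mathcal{F}$. The Hadamard follows the ancilla's C-NOTs, and the ancilla is measured with an odd tag.

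Lemma~\ref{lem:1} should therefore be read purely as a statement about the abstract properties. If you want a concrete consequence for mis-initialisation, it follows from Theorem~\ref{thm:soundness} with $B=\emptyset$, where the output is the single basis state $|y(x,\mathbf{e})\rangle$. It does not follow from the corollary.
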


\begin{proof}
An incorrect initialisation $q^{in}_{Aj}=b'01$ gives $\gamma_j=1$.
For the error-free input, which is allowed by every property, ancilla $j$ therefore has value $1$ instead of $0$.
This violates the syndrome condition.

For an applied Hadamard gate, the properties require an ancilla initialised with $s^{in}_{Aj}=b'000$ to retain $s^{out}_{Aj}=b'000$.
By Definition~\ref{def:h}, each Hadamard gate increases the tag, and saturation at $b'111$ never returns it to $b'000$.
Thus, any number $h\geq1$ of Hadamard gates gives
$s^{out}_{Aj}\neq b'000$ and violates clause~1.
In particular, two consecutive Hadamard gates produce
$s^{out}_{Aj}=b'010$.

The pair is functionally redundant because $H^2=I$.
It is therefore an example of the conservative behaviour described in Remark~\ref{rem:cex}: the modification is reported even though the circuit still computes the intended syndrome.
\end{proof}

\begin{lemma}\label{lem:2}
A missing or additional Hadamard gate on a logical qubit, whether a data or block qubit, 
violates Properties~\ref{prop:1},~\ref{prop:2},~\ref{prop:genrep},~\ref{prop:genshor}, and~\ref{prop:css}.
\end{lemma}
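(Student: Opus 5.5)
The plan is to work entirely in the abstract domain and use the tag clauses of the properties, because a Hadamard gate changes only $s$ and never $q$ (Definition~\ref{def:h}). The proof splits into two cases according to which properties require which tags on the logical qubits.

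\emph{Bit-flip properties.} Properties~\ref{prop:1},~\ref{prop:genrep}, and the bit-flip instance of Property~\ref{prop:css} require every data tag to be $b'000$ at the output.
\begin{enumerate}
\item In the intended circuit no Hadamard acts on a data qubit, so a ``missing'' Hadamard cannot occur here; the only modification is an additional one.
\item Any $h\ge1$ additional Hadamard gates give $s^{out}\neq b'000$. This is the same argument as in Lemma~\ref{lem:1}: the tag is incremented strictly, and saturation alternates between $b'110$ and $b'111$, so it never returns to $b'000$.
\item Hence clause~1 is violated.
\end{enumerate}

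\emph{Phase-flip properties.} Properties~\ref{prop:2},~\ref{prop:genshor}, and the phase-flip instance of Property~\ref{prop:css} require each block or data qubit to have tag exactly $b'010$ at the output. Property~\ref{prop:2} additionally requires $b'001$ after the first Hadamard.
\begin{enumerate}
\item If there are $h$ Hadamard gates on the qubit, the output tag equals $h$ whenever $h\le7$ (Theorem~\ref{thm:soundness}, tag clause / Definition~\ref{def:h}). A missing gate gives $h=1$ and so $s^{out}=b'001$. An extra gate gives $h=3$ and so $b'011$.
\item More generally, any $h\neq2$ yields $s^{out}\neq b'010$. For $h\ge8$ the tag lies in $\{b'110,b'111\}$, which still differs from $b'010$.
\item Either way, the tag clause fails. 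For Property~\ref{prop:2} I will also note that if the first Hadamard is the missing one, the intermediate tag $s^1_{Li}$ is $b'000$ rather than $b'001$.
\end{enumerate}

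The main subtlety is making ``missing or additional'' precise, so that the argument counts Hadamards per qubit rather than tracking positions. If an extra Hadamard is inserted while a missing one is removed on the same qubit, the total stays at two. That relocation is not covered by the lemma and should be excluded by interpreting the lemma as a change in the count.

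I would also remark that the violation is in the tag clause and need not appear in the syndrome. A missing Hadamard on a phase-flip data qubit still leaves $q_0$ parities intact in the abstraction, so detection relies on clause~1. As in Lemma~\ref{lem:1}, a redundant pair of extra Hadamards is flagged conservatively (Remark~\ref{rem:cex}). Finally, Corollary~\ref{cor:props} transfers the result to the concrete circuits whenever the modified circuit remains in $\mathcal{F}$. If it leaves $\mathcal{F}$, the statement is purely about the abstract property violation, which is what the lemma asserts.
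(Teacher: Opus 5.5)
Your proposal is correct and matches the paper's proof: you split into the bit-flip properties, where any added Hadamard moves the data tag away from $b'000$, and the phase-flip properties, where the tag counts Hadamard gates, so removing or adding one moves the final tag away from $b'010$; in both cases clause~1 is violated. Your explicit handling of saturation, where the tag stays in $\{b'110,b'111\}$ and never returns to $b'000$ or $b'010$, together with the $h=0$ case, makes precise what the paper covers with a one-line appeal to Definition~\ref{def:h}.
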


\begin{proof}
The phase-flip properties require each logical qubit to follow the sequence
\[
s=b'000\rightarrow b'001\rightarrow b'010,
\]
which corresponds to exactly two Hadamard operations.
Since the tag records the number of Hadamard gates (Definition~\ref{def:h}), removing either gate or adding another one changes the final tag from $b'010$ and therefore violates clause~1.

For the bit-flip properties, logical qubits must not receive any Hadamard gate.
Consequently, applying an $H$ gate changes the required output tag from $s^{out}_{Li}=b'000$ and violates the property.
\end{proof}

\begin{lemma}\label{lem:3}
For bit-flip detection, a modification to the C-NOT gates in the syndrome-extraction circuit, including a missing, additional, re-wired, or control/target-exchanged C-NOT, violates Properties~\ref{prop:1},~\ref{prop:genrep}, and~\ref{prop:css} if and only if it changes the parity set $\alpha_j$ of at least one ancilla.
\end{lemma}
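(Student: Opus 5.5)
We prove both directions using the affine observation stated before Lemma~\ref{lem:1}. The modification only rearranges C-NOT gates among data and ancilla qubits. It adds no Hadamard gate and does not touch initialisation, so every tag stays $b'000$ and every $\gamma_j=0$. Clauses~1 and~2 of Properties~\ref{prop:1},~\ref{prop:genrep}, and~\ref{prop:css} therefore remain satisfied, and a violation can only occur through the syndrome clause. The first step is to record that, after the modification, ancilla $j$ still carries an affine function $f'_j=\alpha'_j\cdot d$. This holds by induction over the gate list, since each C-NOT XORs one affine function into another. Here $\alpha'_j$ is the parity set of the mutated circuit. One caveat must be handled: a re-wired or exchanged C-NOT could have an ancilla as control, or a data qubit as target. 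We will therefore define the parity set more generally, as the coefficient vector of the ancilla's final affine function in the data variables $d$. We also note that a target data qubit becomes $d_i\oplus(\text{affine})$, which is again affine in $d$, so the representation is preserved in all cases.

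For the ``if'' direction, suppose $\alpha'_j\neq\alpha_j$ for some $j$. Then $\alpha'_j\oplus\alpha_j$ is a nonzero vector, and we exhibit an admissible input on which $f'_j\neq f_j$. For the repetition code, $d=c\mathbf{1}\oplus\mathbf{e}$ with $|\mathbf{e}|\le1$. For CSS codes, $d=v\oplus\mathbf{e}$ with $\mathbf{H}_Zv=0$ and $|\mathbf{e}|\le1$. The admissible values of $d$ therefore include the no-error word together with every single-error word $d\oplus\mathbf{u}_k$. Pick $k$ in the support of $\alpha'_j\oplus\alpha_j$. Comparing the syndrome on $d$ with the syndrome on $d\oplus\mathbf{u}_k$, at least one of the two inputs has $f'_j\neq f_j$, because the difference $(\alpha'_j\oplus\alpha_j)\cdot\mathbf{u}_k=1$ flips. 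On that input the measured ancilla differs from the specified syndrome $\mathbf{H}\mathbf{e}$, so the syndrome clause fails and the SMT query for $\neg P$ has a witness. This argument applies uniformly to Properties~\ref{prop:1} and~\ref{prop:genrep}, which are the special cases of adjacent-pair and first-qubit checks, and to Property~\ref{prop:css}.

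For the ``only if'' direction, suppose every $\alpha'_j=\alpha_j$. Then each ancilla computes exactly $\alpha_j\cdot d$, which is the specified syndrome for every admissible input. The tags are unchanged, and the ancillas are still initialised to $b'00$ and measured last. Hence every clause holds and the property is not violated. A subtle point is that a mutated C-NOT targeting a data qubit could alter data values. These properties constrain only ancillas and tags, however, so such a change does not violate them; that case belongs to Property~\ref{prop:anyerr}.

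The main obstacle is the ``if'' direction when the mutation makes an ancilla's function depend on another ancilla, or on a data qubit that was itself modified earlier. In that case the naive parity set, which counts only direct data-to-ancilla C-NOTs, would be the wrong object. The plan is to fix the definition of $\alpha'_j$ as the final coefficient vector, obtained by substitution through the gate sequence. With that definition the equivalence is just the statement that two linear forms agree on all admissible $d$ if and only if they are equal. Proving this needs the single-error words to span the differences, which is exactly the step carried out above.
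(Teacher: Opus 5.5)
Your proof is correct and follows essentially the same route as the paper: unchanged tags and $\gamma_j=0$ reduce the question to the syndrome clause, and the ``only if'' direction holds because the properties constrain only ancilla values and tags. The ``if'' direction uses an admissible single-error input on a qubit in the symmetric difference of the parity sets, as in the paper. The paper fixes the code word to $0$ ($c=0$ or $v=0$), so $d$ is itself the unit vector and your two-input comparison is unnecessary, and its formal definition of $\alpha_j$ as the coefficient vector in $f_j=\alpha_j\cdot d\oplus\gamma_j$ already coincides with the generalised parity set you introduce for ancilla-controlled or data-targeting C-NOTs.
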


\begin{proof}
If all parity sets remain unchanged, every ancilla computes the intended function.
Since the properties constrain the ancilla values and tags, they continue to hold.

Conversely, suppose that
$\alpha_j\neq\alpha_j^{*}$ for some ancilla $j$, where $\alpha_j^{*}$ is the intended parity set.
Choose a qubit
\[
i\in\alpha_j\,\triangle\,\alpha_j^{*}
\]
from their symmetric difference.
Consider the admissible input consisting of the error-free code word $0$, that is, $c=0$ for the repetition code or $v=0$ for the CSS code, together with a single error on qubit $i$.
The resulting data vector $d$ is the $i$th unit vector.
The value computed by ancilla $j$ then differs from the intended value $\alpha_j^{*}\cdot d$ by one, so the syndrome condition is violated.

A missing C-NOT removes a qubit from the parity set.
An additional C-NOT can change the parity set, and a re-wired C-NOT replaces one qubit by another.
Exchanging the control and target also removes the data qubit from the parity set.
In addition, such an exchange can corrupt the target data qubit, which is detected by the data-preservation condition of Property~\ref{prop:anyerr}.

By contrast, reordering the C-NOT gates acting on the same ancilla does not change its parity set.
It is therefore not detected, since the reordered circuit is functionally equivalent to the intended circuit.
\end{proof}

\begin{lemma}\label{lem:4}
The statement of Lemma~\ref{lem:3} also holds for the phase-flip detection circuits and Properties~\ref{prop:2},~\ref{prop:genshor}, and~\ref{prop:css}.
\end{lemma}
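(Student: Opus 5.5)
The plan is to reduce the phase-flip case to the bit-flip case of Lemma~\ref{lem:3} through the frame reading, which Theorem~\ref{thm:soundness} justifies. In the phase-flip circuits of Definitions~\ref{def:genshor} and~\ref{def:css}, each data or block qubit has exactly one frame segment, namely $H$, an optional injected phase flip, and $H$. The syndrome-extraction C-NOTs act only after the second Hadamard gate. By Definition~\ref{def:h}, Hadamard gates leave $q$ unchanged, and by the frame-segment case of the invariant in Theorem~\ref{thm:soundness}, the injected error complements $q_0$. Hence, after the closing Hadamard gate, the basis bit of data qubit $i$ is $d_i=v_i\oplus e_i$, or $c\oplus e_i$ at block level. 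This is exactly the affine form used for the bit-flip case. From that point on, the extraction part is an $X$/C-NOT circuit on computational qubits. Each ancilla value is therefore $f_j=\alpha_j\cdot d\oplus\gamma_j$, with $\alpha_j$ the parity set as defined before Lemma~\ref{lem:1}.

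Next I prove the two directions exactly as in Lemma~\ref{lem:3}.

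\emph{If.} If every $\alpha_j$ equals the intended row $\alpha_j^{*}$ of $\mathbf{H}_X$, or the adjacent-block parity for Property~\ref{prop:genshor}, then every ancilla computes the intended function. The C-NOT modifications do not touch any tags, since C-NOT never changes $s$. So clause~1, with data tags $b'000\to b'001\to b'010$ and ancilla tags $b'000$, and clause~2 still hold, and the properties are satisfied.

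\emph{Only if.} I pick $i\in\alpha_j\triangle\alpha_j^{*}$ and take the admissible input $v=0$ (respectively $c=0$) with a single phase flip on qubit $i$. This input is allowed by the at-most-one-error hypothesis of Properties~\ref{prop:2},~\ref{prop:genshor} and~\ref{prop:css}. Then $d$ is the $i$th unit vector, so ancilla $j$ differs from its intended value $\alpha_j^{*}\cdot d$, and the syndrome clause fails. For Property~\ref{prop:2} the block-level mapping is the $k=3$ instance of Property~\ref{prop:genshor}, so it is covered by the same argument.

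The main obstacle is making sure that the modified circuit still lies in $\mathcal{F}$, so that the reduction is legitimate. The issue is a mutated C-NOT placed \emph{inside} a frame segment, or a control/target exchange that makes an ancilla control a data qubit before its closing Hadamard gate. I would handle this by noting that the lemma concerns modifications to the syndrome-extraction stage, which lies after all frame segments close. Every such mutant therefore keeps the frame segments intact and stays in $\mathcal{F}$, and its behaviour is then determined by the abstract affine functions. For an exchanged control and target, the data qubit is removed from $\alpha_j$, which is already detected. Any corruption of the data is additionally caught by the preservation clause of Property~\ref{prop:anyerr}, exactly as in Lemma~\ref{lem:3}. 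Finally, reordering C-NOTs on the same ancilla leaves $\alpha_j$ unchanged and is functionally equivalent, so it is correctly not detected. Corollary~\ref{cor:props} transfers the conclusion to the concrete circuits.
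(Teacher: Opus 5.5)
Your proposal is correct and follows the paper's proof. Both arguments note that the syndrome C-NOTs act only after the closing Hadamard gate, where, by case~6 of Theorem~\ref{thm:soundness}, each data or block basis bit is $v_i\oplus e_i$ or $c\oplus e_i$, so the parity-set argument of Lemma~\ref{lem:3} applies verbatim, with parity sets taken over blocks for Property~\ref{prop:genshor}. Your extra check that extraction-stage modifications keep the circuit in $\mathcal{F}$ is a detail the paper leaves implicit; it matters only for the transfer via Corollary~\ref{cor:props}, since the lemma itself is proved in the abstract domain.
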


\begin{proof}
In the phase-flip circuits, the ancillas are computed after the second Hadamard gate of each data or block qubit.
At this point, the tags are even and $q_0$ represents the computational-basis value
$v_i\oplus e_i$ for CSS codes, or $c\oplus e_i$ for the concatenated code (Theorem~\ref{thm:soundness}, case~6).
The argument of Lemma~\ref{lem:3} therefore applies directly.
For Property~\ref{prop:genshor}, the parity sets are defined over logical blocks rather than individual data qubits.
\end{proof}

\begin{lemma}\label{lem:5}
A modification to a Bell-state or GHZ-state generation circuit, including a missing, additional, or incorrectly placed Hadamard gate, or a missing, additional, or incorrectly connected C-NOT gate, violates Property~\ref{prop:3} if and only if it changes the Hadamard count of a qubit or changes the function computed by the basis bit of a qubit.
In the intended circuit, the latter function is
$q_0^{out}=x_1$ for every qubit.
\end{lemma}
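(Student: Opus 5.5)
We prove the two directions separately, working in the abstract domain as for Lemmas~\ref{lem:3} and~\ref{lem:4}; Corollary~\ref{cor:props} then transfers the result to the concrete circuit.

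We first put Property~\ref{prop:3} into a form that mentions only tags and basis-bit functions. Clauses~1 and~2 say exactly that $Q_1$ receives one Hadamard gate, giving tag $b'001$, and every other qubit receives none, giving tag $b'000$. By Definition~\ref{def:h} the tag equals the Hadamard count (the count never reaches seven here), so clauses~1 and~2 hold if and only if the Hadamard count of every qubit matches the intended one. Because the modified circuit contains only $X$ and C-NOT gates after the Hadamard gates, and Hadamard gates leave $q$ unchanged, every output basis bit is an affine function $f_i(x_1)=\alpha_i x_1\oplus\gamma_i$ of the single free input variable. The other inputs are fixed to $0$ by clause~3.

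\emph{If direction.} Suppose the Hadamard counts and all functions $f_i$ are unchanged. The tag and basis-value conditions are then literally the same formulas as in the intended circuit, so the property still holds.

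\emph{Only-if direction.} Suppose first that the Hadamard count of some qubit changes. Clause~1 or clause~2 then fails immediately. A missing Hadamard gate on $Q_1$ leaves its tag at $b'000$ rather than $b'001$. An extra or misplaced Hadamard gate puts a nonzero tag on some $Q_i$ with $i>1$, or gives $Q_1$ a tag of $b'010$ or more.

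Suppose instead that all counts are unchanged but some $f_i\neq x_1$. In the intended circuit every $f_k=x_1$, so we must show that the outputs cannot all agree. There are two sub-cases.
\begin{enumerate}
\item If $f_1$ is still $x_1$, then $f_i\neq f_1$ as affine functions, and any $x_1$ separating them violates clause~3.
\item If $f_1$ itself changed, clause~3 may still hold because all outputs are equal to some common function other than $x_1$. This is the delicate point. Since $Q_1$ is the only qubit whose value depends on $x_1$ at the start, and C-NOT gates never alter their control, we need to check what a C-NOT targeting $Q_1$ can do. Such a gate can only XOR in a value that is itself affine in $x_1$.
\end{enumerate}
We plan to handle sub-case~2 by interpreting the statement's ``function computed'' condition through the clause-3 formula. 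The equalities $q_1^{out}=\dots=q_m^{out}$ hold for both values of $x_1$ if and only if all $f_i$ coincide. We then argue that, with Hadamard counts unchanged, every listed modification that changes some $f_i$ away from $x_1$ also breaks this coincidence:
\begin{itemize}
\item a missing C-NOT leaves its target constant at $0$;
\item an incorrectly connected C-NOT, whose control carries the value $0$ rather than $x_1$, leaves its target constant at $0$;
\item an additional C-NOT into an $x_1$-carrying qubit cancels that qubit's value to $0$, while a qubit not touched keeps $x_1$.
\end{itemize}
In each case some qubit is constant and another equals $x_1$. For $m\geq2$ one of them is untouched by a single modification, so the outputs disagree at $x_1=1$.

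The main obstacle is sub-case~2: excluding modifications that move every qubit uniformly to the same wrong function. We would resolve it by an explicit enumeration of the single-gate modification types in the mutation model, showing that each one affects at most one target's function. If needed, we would restate the ``only if'' direction so that it depends on the functions $f_i$ not all being equal.
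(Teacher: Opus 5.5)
Your skeleton matches the paper's proof. Clauses~1--2 fix the Hadamard counts, every output bit is affine in the single free variable $x_1$, and the C-NOT modifications are handled case by case. The gap is your sub-case~2, which carries the ``changes a function $\Rightarrow$ violation'' direction when $Q_1$ itself is affected. You leave it open, and the way you propose to close it does not work.

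\begin{itemize}
\item The claim that each single-gate modification affects at most one target's function is false. In the chain topology, a missing $\text{C-NOT}(q_1,q_2)$ sets $q_2,\dots,q_m$ all to $0$. In the tree topology, a missing C-NOT zeroes a whole subtree. In a star circuit, an additional $\text{C-NOT}(q_2,q_1)$ inserted after $\text{C-NOT}(q_1,q_2)$ changes $Q_1$ and every later target.
\item Your bullet list omits the re-targeted C-NOT, whose control does carry $x_1$. It also misdescribes the swapped C-NOT: there the new target (the old control) keeps $x_1$, and it is the old target that stays $0$.
\item Restating the lemma in terms of ``the $f_i$ are not all equal'' would prove a different statement.
\item Your ``If'' and ``Only-if'' labels are interchanged. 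The implication ``counts and functions unchanged $\Rightarrow$ property holds'' is the ``only if'' direction.
\end{itemize}

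The missing idea is short. None of the listed modifications introduces an $X$ gate, and $H$ never changes $q$. So the vector of output basis bits is $x_1\,Me_1$, where $M$ is the product of the C-NOTs and hence an invertible matrix over GF(2). Two consequences follow:
\begin{itemize}
\item each $f_i\in\{0,x_1\}$, with no constant term;
\item at $x_1=1$ the output vector is non-zero, so some $f_i$ equals $x_1$.
\end{itemize}
Therefore all $f_i$ are equal if and only if all equal $x_1$, and sub-case~2 cannot occur. This argument needs no enumeration of operators. It also covers arbitrary combinations of $H$ and C-NOT modifications, which is what Lemma~\ref{lem:6} requires.

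The paper handles the same point differently. It observes that $f_1=x_1$ unless some C-NOT targets $Q_1$, and then goes through the modification types. In the harmful case, an added C-NOT from an $x_1$-carrying qubit into $Q_1$, the control keeps $x_1$ and serves as the witness against $Q_1$. That is exactly the fact ``C-NOT gates never alter their control,'' which you noted but did not use.

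Your worry is legitimate outside the lemma's modification classes. An $X$ gate inserted on $Q_1$ before the C-NOTs would change every output to $x_1\oplus1$ without violating clause~3.
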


\begin{proof}
Clauses~1 and~2 of Property~\ref{prop:3} fix the required Hadamard count for every qubit.
Therefore, any change in these counts violates the property.

The only free variable is the input value $x_1$ of the first qubit.
Every output basis bit is consequently an affine function of $x_1$, namely one of
$0$, $1$, $x_1$, or $x_1\oplus1$.
Clause~3 requires all output functions to equal the function of the first qubit, which is $x_1$ unless a C-NOT targets the first qubit.

If a qubit computes a different affine function, that function differs from $x_1$ for at least one value of $x_1$, and the solver produces a counterexample.
A missing C-NOT leaves its target at $0$.
Exchanging the control and target also leaves the target at $0$ because the control qubit initially has value $0$ and does not change the first qubit.
A re-targeted C-NOT can leave its intended target at $0$ or flip another qubit twice.
An additional C-NOT from a qubit carrying $x_1$ to another qubit carrying $x_1$ resets that qubit to $0$.
These cases therefore violate clause~3.

In contrast, re-wiring a C-NOT to another control that already carries $x_1$ leaves all computed functions unchanged.
Such a circuit is still a valid GHZ circuit.
The three topologies considered in Section~\ref{sec:resghz} differ in this way, so the modification is correctly not reported.
\end{proof}

\begin{lemma}\label{lem:6}
Let a circuit be obtained from an intended circuit by any combination of the modifications considered in Lemmas~\ref{lem:1}--\ref{lem:5}.
The resulting circuit violates the corresponding property among
Properties~\ref{prop:1}--\ref{prop:3},
\ref{prop:genrep},
\ref{prop:genshor},
\ref{prop:steane}, and
\ref{prop:css}
if and only if the combination changes the Hadamard count of some qubit, the initial value of an ancilla, or the affine function computed by an observed qubit.
The latter includes the parity set of an ancilla and the basis bit of a qubit in a Bell or GHZ circuit.
\end{lemma}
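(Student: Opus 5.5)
The plan is to reduce the combined case to the single-modification arguments by describing the abstract semantics of any modified circuit through three observable quantities: the Hadamard count of each qubit, the initial value $\gamma_j$ of each ancilla, and the affine function over GF(2) computed by each observed basis bit. I will first show that every property in the list depends only on these quantities. The tag conditions are conditions on Hadamard counts, by Definition~\ref{def:h}: the tag is injective in the count up to seven, and saturation never returns to $b'000$. The measurement-status conditions are conditions on the ancilla initial values. The syndrome and equality conditions are conditions on the affine functions $f_j=\alpha_j\cdot d\oplus\gamma_j$, or on $q_0^{out}$ for Bell/GHZ circuits. A composition of $X$ and C-NOT modifications still yields an affine function of the free input and error variables, and Hadamard gates do not touch $q$. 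Hence the affine-function description is preserved under arbitrary combinations, and the observation preceding Lemma~\ref{lem:1} applies verbatim to the modified circuit.

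\emph{If direction.} Suppose none of the three quantities changes. Then every tag, every initial ancilla value, and every observed basis function coincides with the intended one. The property, being a predicate on exactly these quantities, holds for the modified circuit exactly as for the intended one. This is the argument of the first paragraph of the proof of Lemma~\ref{lem:3}, now applied uniformly.

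\emph{Only-if direction.} I split into cases, checking them in this order so that earlier changes can be assumed absent in later cases.
\begin{enumerate}
\item \emph{A Hadamard count changes.} Clause~1 of the relevant property fails, as in Lemmas~\ref{lem:1},~\ref{lem:2} and~\ref{lem:5}.
\item \emph{No count changes, but some $\gamma_j=1$.} Evaluate at the error-free input, where $f_j=\alpha_j\cdot d\oplus 1$. If $\alpha_j$ is unchanged this gives $1$ instead of $0$.
\item \emph{Some parity set differs.} Use the unit-vector witness of Lemma~\ref{lem:3}: pick $i\in\alpha_j\triangle\alpha_j^*$ and take $d=e_i$.
\end{enumerate}

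The main obstacle is that simultaneous changes could mask each other, for example an incorrect $\gamma_j$ combined with a changed $\alpha_j$. I will handle this by arguing directly on the affine function. Two affine functions $f_j\ne f_j^*$ over GF(2) differ at some admissible assignment: either their constant terms differ, witnessed by $d=0$, or their linear parts differ, witnessed by $d=e_i$ combined with $d=0$. Both assignments are admissible under every listed property (code word $0$ with at most one error). For the Bell/GHZ case the affine function is in the single variable $x_1$, and distinct functions in $\{0,1,x_1,x_1\oplus1\}$ differ at $x_1=0$ or $x_1=1$, as in Lemma~\ref{lem:5}.

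For the phase-flip families I will invoke the reduction of Lemma~\ref{lem:4} via Theorem~\ref{thm:soundness}, case~6, noting that this requires the modified circuit to stay in $\mathcal{F}$. Where a modification leaves $\mathcal{F}$ (e.g. an extra Hadamard), the violation is already caught in case~1 by the tags, so soundness is never needed there. Corollary~\ref{cor:props} then transfers the abstract result to the concrete circuits.
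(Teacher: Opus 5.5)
Your route is the paper's. Its proof just notes that Lemmas~\ref{lem:1}--\ref{lem:5} are phrased in terms of Hadamard counts, ancilla initial values and affine output functions. The properties therefore ``constrain only these quantities'', and the earlier witnesses supply counterexamples. Your two-point witness ($d=0$ for a changed constant term, $d=e_i$ with $v=0$ for a changed linear part) is a useful sharpening of that appeal when both parts change together. Your ``if'' and ``only-if'' labels are swapped: ``no change $\Rightarrow$ no violation'' is the only-if direction.

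The step that fails is case~2. The observation before Lemma~\ref{lem:1} does \emph{not} carry over verbatim. Once exchanged or additional C-NOTs with an ancilla as control are allowed, the constant term of $f_j$ need not be $\gamma_j$, and a mis-initialisation can be cancelled entirely. Take the $n=3$ repetition code, initialise $a_1$ to $b'01$, and insert C-NOT$(a_1,q_1)$ immediately before C-NOT$(q_1,a_1)$. Then:
\begin{itemize}
\item $q_1$ is flipped, and $a_1$ ends at $1\oplus(q_1\oplus 1)\oplus q_2=q_1\oplus q_2$;
\item $a_0$ is computed earlier and is unaffected, and all tags are unchanged.
\end{itemize}
The syndrome is correct on every input, so neither your evaluation at $d=0$ nor your affine-function argument yields a counterexample. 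Yet the lemma asserts a violation, because an ancilla's initial value changed. Only clause~2 of Property~\ref{prop:genrep} (ancillas initialised to $b'00$), used as a direct check, catches this. You already listed that dependence in your first paragraph, so case~2 should invoke it instead of the syndrome. The paper's proof, which reuses Lemma~\ref{lem:1}'s syndrome witness here, glosses over the same point.

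Your last paragraph should be dropped or restricted. The lemma is an abstract-domain statement, and the abstract C-NOT reads $q_0$ regardless of the tag. So the phase-flip reduction of Lemma~\ref{lem:4} is syntactic and needs no appeal to Theorem~\ref{thm:soundness}. The claim that every exit from $\mathcal{F}$ is caught by the tags is also false for combinations. Remove the closing Hadamard of a block qubit and add one after its syndrome C-NOTs. Its count stays at two and every affine function is unchanged, but the C-NOTs now sit inside a frame segment, as in C4 of Remark~\ref{rem:cex}. The abstract property holds, consistent with the lemma. However, Corollary~\ref{cor:props} cannot transfer this to the concrete circuit, whose ancilla outcome is now random. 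The concrete reading of the lemma therefore applies only to modified circuits that remain in $\mathcal{F}$.
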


\begin{proof}
The conditions in Lemmas~\ref{lem:1}-\ref{lem:5} are expressed in terms of the Hadamard counts, initial values, and affine functions of the resulting circuit, rather than in terms of individual modifications.
The same arguments therefore apply after any combination of the considered modifications.

For the ``if'' direction, each changed quantity gives an admissible counterexample, as established by the preceding lemmas.
For the ``only if'' direction, the properties constrain only these quantities.

Consequently, two modifications can cancel each other.
For example, two identical additional C-NOT gates acting on the same ancilla leave its parity unchanged.
The resulting circuit is therefore functionally equivalent to the intended circuit and is not reported.

The SMT solver checks each property for all admissible assignments.
Hence, a single counterexample is sufficient to reject a circuit.
The mutation experiments in Section~\ref{sec:resmutation} provide empirical confirmation of this characterisation for the implemented encoding.
\end{proof}

\section{Verification Methodology}\label{sec:methodology}

This section describes the SMT-based verification workflow, the size of the generated constraint system, 
the mutation-based validation strategy used to evaluate the proposed properties, and the experimental environment in which all evaluations were conducted.

\subsection{SMT encoding and verification}\label{sec:encoding}

Each circuit family introduced in Section~\ref{sec:absfrmwrk} is encoded directly using the Z3 Python API~\cite{bib33}. This avoids the intermediate SMT-LIB representation used in our earlier implementation. 
Each abstract qubit is represented by two fixed-width \texttt{BitVec} terms for its $(s,q)$ components. 
The abstract gates defined in Definitions~\ref{def:h}-\ref{def:cnot} are encoded as equality constraints between fresh output bit-vectors and the corresponding input bit-vectors. 
Conditional behaviour is represented using Z3's \texttt{If} expressions. 
Constant inputs, such as ancillas initialised to $b'00$, are represented by bit-vector literals. 
Similarly, the tags of qubits that do not receive a Hadamard gate remain literals throughout the encoding.

Faults and error locations are represented by free 1-bit variables. 
For properties that restrict the number of errors, these variables are constrained by a cardinality condition $\sum_i e_i\le w$, 
encoded using Z3's native pseudo-Boolean constraint \texttt{AtMost}. 
Code-word preconditions ($\mathbf{H}_Zv=0$ or $\mathbf{H}_Xv=0$) are encoded as XOR constraints over the free data variables. 
Given a correctness property $P$, 
verification asserts the circuit constraints together with its negation, 
$\neg P$, and invokes the solver using \texttt{check()}. 
An \texttt{unsat} result shows that the property holds for all valid assignments of the symbolic variables. 
For circuits in $\mathcal{F}$, Corollary~\ref{cor:props} then transfers this result to the corresponding concrete circuit. 
Conversely, a \texttt{sat} result provides a counterexample, namely an assignment that violates the property.

\paragraph{Size of the constraint system.}
Consider a circuit with $n$ qubits, $g_H$ Hadamard gates, 
$g_X$ X gates or error injections, and $g_C$ C-NOT gates. 
The encoding introduces one fresh 3-bit variable for each Hadamard gate and one fresh 2-bit variable 
for each X gate, error injection, C-NOT gate and measurement, 
together with one equality constraint for each. Thus, the number of SMT variables and assertions is $O(n+g_H+g_X+g_C)$. 
It is independent of the circuit depth except through the number of gates. 
Each constraint is a fixed-width \texttt{If}/XOR relation 
involving at most three bit-vectors of width at most three. 
After bit-blasting, this gives a constant number of clauses per gate, 
so the propositional encoding also has $O(n+g_H+g_X+g_C)$ clauses.

The property itself is the main source of non-constant-size constraints, 
with one conjunct for each checked qubit. 
The cardinality constraint over the $N$ error variables is handled by Z3's pseudo-Boolean reasoning; 
for the clause counts reported below, the \texttt{card2bv} tactic compiles this constraint into $O(N\log N)$ clauses. 
The code-word preconditions contribute $O(\text{nnz}(\mathbf{H}))$ XOR terms. 
For a GHZ circuit with $m$ qubits, 
the encoding therefore contains $m+1$ variables and $m$ gate constraints for all three topologies. 
For a CSS circuit, the size is $O(N+\text{nnz}(\mathbf{H}))$.

Section~\ref{sec:ressize} reports the measured numbers of variables, 
assertions and CNF clauses after Z3's \texttt{simplify}, \texttt{card2bv}, \texttt{bit-blast} and \texttt{tseitin-}\allowbreak\texttt{cnf} tactics. 
These measurements show the expected linear growth of the encoding. 
Linear encoding size, however, does not imply linear solving time. 
In general, deciding the validity of such formulas is co-NP-complete, 
and solver performance depends on the structure of the constraints. 
Section~\ref{sec:resrandom} shows that dense XOR dependencies, 
such as those produced by deep random C-NOT networks, are considerably harder to solve than the sparse dependencies in the code and GHZ circuits. 
Cardinality constraints provide a second source of difficulty. 
For large surface codes, the single-error query is slower than the arbitrary-error query of Property~\ref{prop:anyerr}, 
even though the latter implies it. 
Decoders that compare error weights also make recovery queries more expensive (Section~\ref{sec:resmulti}).

\subsection{Mutation-based fault injection}\label{sec:mutationmethod}
To complement the formal proofs in Section~\ref{sec:proofs}, 
we evaluate the fault-detection capability of the proposed framework using mutation testing. 
Each circuit-construction routine accepts an optional mutation operator and its target location, 
allowing faulty circuit variants to be generated automatically. 
The operators, their target locations and the circuit families to which they apply are listed in Table~\ref{tab:mutops}. 
They cover the fault classes considered in Lemmas~\ref{lem:1}-\ref{lem:5}, 
including missing, additional and incorrectly connected gates, as well as incorrect ancilla initialisation.

Each operator is applied at every valid location in every circuit instance. 
The resulting distribution of mutants is therefore determined by the circuit structure rather than selected manually. 
This avoids selection bias and makes the complete sweep exactly reproducible.

\begin{table}[!h]
\centering
\caption{Mutation operators. ``Ancilla'' operators are applied to every ancilla (parity check) of every instance, ``data'' operators to every data or block qubit, and GHZ operators at positions $1$, $m/2$ and $m-1$.}
\label{tab:mutops}
\footnotesize
\begin{tabularx}{\textwidth}{
>{\raggedright\arraybackslash}p{1.95cm}
>{\raggedright\arraybackslash}X
>{\raggedright\arraybackslash}X
}
\toprule
Operator & Modification & Applies to \\
\midrule
\texttt{drop\_cnot} & omit the last C-NOT of an ancilla's parity computation & ancilla \\
\texttt{wrong\_wire} & re-wire the last C-NOT of an ancilla to a data qubit outside the check & ancilla \\
\texttt{extra\_cnot} & add a C-NOT from a data qubit outside the check & ancilla \\
\texttt{swapped\_cnot} & exchange control and target of the last C-NOT of an ancilla & ancilla \\
\texttt{reorder\_cnot} & reverse the order of an ancilla's C-NOT gates (equivalent by construction) & ancilla \\
\texttt{mis\_init} & initialise the ancilla to $q=b'01$ instead of $b'00$ & ancilla \\
\texttt{extra\_h\_anc} & add a Hadamard gate on the ancilla before measurement & ancilla \\
\texttt{missing\_h1}, \texttt{missing\_h2} & omit the first or second Hadamard gate of a data/block qubit & data (phase-flip) \\
\texttt{extra\_h} & add a third Hadamard gate on a data/block qubit & data (phase-flip) \\
\texttt{missing\_h}, \texttt{extra\_h} & omit, or duplicate, the Hadamard gate of a Bell/GHZ circuit & Bell, GHZ \\
\texttt{missing\_cnot} & omit the C-NOT that entangles a given qubit & Bell, GHZ \\
\texttt{swapped\_cnot} & exchange control and target of that C-NOT & Bell, GHZ \\
\texttt{wrong\_target} & apply that C-NOT to the next qubit instead & GHZ \\
\bottomrule
\end{tabularx}
\end{table}
\paragraph{Equivalent and non-generated mutants.}
A mutant that produces exactly the same abstract outputs 
as the original circuit for every admissible input cannot be detected by any functional property. 
Following standard mutation-testing practice~\cite{bib38,bib52}, 
such mutants are excluded.
Rather than identifying them by inspection, we determine equivalence mechanically using an SMT \emph{miter}.
The original circuit and the mutant are constructed using the same input 
and error variables but separate internal variables, 
together with the admissibility constraints. 
The solver then searches for an assignment for which any output component $(s,q)$ of any qubit differs between the two circuits.

An \texttt{unsat} result proves equivalence, whereas a \texttt{sat} result proves non-equivalence. 
Since the abstract circuits are finite-state, this procedure is complete, 
so no equivalent mutant remains unidentified among the evaluated mutants. 
Only mutants proved to be non-equivalent by the miter are evaluated against the correctness property.

If a mutation location does not exist, the mutant is reported as not generated rather than silently replacing it with the unmutated circuit. 
For example, this occurs when the C-NOT of the last GHZ qubit is re-targeted to a non-existent next qubit. 
The equivalent mutant reported in the submitted version of this article was of this type: 
for the 3-qubit repetition code, the re-wiring target of the second ancilla did not exist. 
With the cyclic choice of re-wiring target used here, this mutation exists and is detected.

For every generated mutant, the corresponding correctness property is verified again. 
A mutant is considered \emph{detected} when the solver returns \texttt{sat}, 
showing that the injected fault violates the property. 
The detection rate for a group of mutants is the proportion of non-equivalent mutants that are detected.

\paragraph{Relationship with the formal proofs.}
Lemmas~\ref{lem:1}-\ref{lem:6} establish, in the abstract domain, 
which modifications violate the corresponding properties. 
The mutation study is therefore not an independent validation of soundness, 
which is addressed by Theorem~\ref{thm:soundness} 
and the differential test in Section~\ref{sec:ressound}. 
Instead, it is used to validate the implementation of the properties and circuit encodings against the lemmas.

\subsection{Experimental setup}\label{sec:setup}
All experiments were performed using Z3 version 4.13.4~\cite{bib33}, 
installed as the \texttt{z3-solver}~4.13.4.0 Python package, 
and Python~3.13.5 on a MacBook with an Apple M1 processor 
(4 performance and 4 efficiency cores) and 16~GB of memory, running macOS 27.0. 
For the solver-version ablation in Section~\ref{sec:resablation}, 
Z3~5.1.0 with Python~3.10.11 was used on the same machine. 
Z3 was run with its default configuration through \texttt{Solver()}, 
using a single thread and no custom tactics. 
No timeout was used except for the random networks in Section~\ref{sec:resrandom}, 
for which a 600s timeout was applied. 
All experiments were run sequentially, with one process at a time.

\paragraph{Timing protocol.}
For each benchmark instance, one warm-up run was discarded, 
followed by $R$ independent repetitions. 
We used $R=10$ for all instances except GHZ circuits with more than 
16,384 qubits and surface codes with more than 2,601 data qubits, 
for which $R=5$, surface codes with more than 5,041 data qubits, 
for which $R=3$, 
and the multi-error properties in Section~\ref{sec:resmulti}, for which $R=3$.

Each repetition rebuilds the circuit and creates a fresh solver, 
so no solver state is shared between repetitions. 
We report the median, mean, standard deviation, minimum and maximum. 
\emph{Solve time} is the wall-clock time required for solver creation, 
constraint assertion and \texttt{check()}. 
\emph{Build time}, which covers the Python-side construction of the circuit terms, 
is reported separately because it dominates the end-to-end time for the largest instances.

In the random-network experiment, each configuration was run once for each of three random seeds. 
\emph{Peak memory} is the peak resident-set size (\texttt{getrusage}, 
\texttt{ru\_maxrss}) of a fresh Python process that builds and verifies a single instance. 
We report the median over three processes, together with the baseline memory usage of the interpreter with Z3 loaded.

Unlike the preliminary study~\cite{bib34}, which used a 24-core workstation with 192~GB of memory, 
the experiments reported here were conducted on a commodity laptop. 
This configuration was chosen to assess the practicality of the proposed abstraction under hardware resources representative of a typical research and software-development environment.


\section{Experimental Results}\label{sec:results}
This section evaluates the proposed verification framework experimentally.
We first reproduce the baseline results from the original study, 
followed by experiments on the generalised repetition and concatenated phase-flip code families.
We then evaluate CSS syndrome extraction, including the Steane $[[7,1,3]]$ code, together with the multiple-error, detection and recovery properties.
Finally, we study scalability using GHZ circuits with three topologies, random networks of increasing depth, 
measurements of encoding size, and a controlled ablation of the encoding pipeline.
We also report the differential soundness test and the mutation study.
The measurement protocol is described in Section~\ref{sec:setup}.

\subsection{Baseline reproduction}
As an initial validation, we reproduced the verification experiments reported in~\cite{bib34} using the implementation described in Section~\ref{sec:methodology}.
Although the experiments were run on different hardware and use direct Z3 API encoding rather than SMT-LIB generation, all benchmark circuits were successfully verified.
Table~\ref{tab:baseline} summarises the results.
Each circuit is verified in about one millisecond, 
with standard deviations of a few percent of the median.
The first query included the one-off initialisation time of the solver library,
and is excluded here by using a warm-up run.

\begin{table}[!h]
\centering
\caption{Baseline verification results for the fixed-size benchmark circuits (10 repetitions after one warm-up run).}
\label{tab:baseline}
\footnotesize
\begin{tabular}{@{}lccrrr@{}}
\toprule
Circuit & Qubits & Verified & Median (ms) & Mean $\pm$ s.d. (ms) & Min--max (ms) \\
\midrule
2-qubit bit-flip & 3 & Yes & 0.809 & 0.824 $\pm$ 0.041 & 0.789--0.923 \\
3-qubit bit-flip & 5 & Yes & 1.04 & 1.05 $\pm$ 0.038 & 0.984--1.11 \\
9-qubit Shor phase-flip & 11 & Yes & 1.18 & 1.18 $\pm$ 0.031 & 1.11--1.22 \\
Bell-state generation & 2 & Yes & 0.398 & 0.400 $\pm$ 0.009 & 0.387--0.417 \\
\bottomrule
\end{tabular}

\end{table}

\FloatBarrier
\subsection{Generalised repetition and concatenated phase-flip codes}
The generalised repetition code (Property~\ref{prop:genrep}) and the generalised concatenated phase-flip code (Property~\ref{prop:genshor}) 
were verified for increasing values of the size parameters $n$ and $k$, 
from the original fixed-size instances up to codes containing 101 logical elements.
The results are given in Table~\ref{tab:scaling}, and Figure~\ref{fig:codescaling} shows the corresponding runtime trends.
All instances satisfy their respective correctness property.
Verification time increases approximately linearly with code size, 
which is consistent with the linear number of parity constraints.
The largest instances complete in 32.8ms and 38.0ms, respectively.

The concatenated-code results use the corrected encoding described in Section~\ref{sec:mutationmethod}.
Its ancilla parities are computed using $2(k-1)$ C-NOT gates.
It therefore has the same order of cost as the repetition code, 
rather than the nearly constant cost obtained with the submitted implementation.

\begin{table}[!h]
\centering
\caption{Solve time for the generalised repetition and concatenated phase-flip codes (median $\pm$ standard deviation over 10 repetitions). The concatenated code is modelled with one abstract block-qubit per three-qubit block; the physical-equivalent count is $3k+(k-1)$.}
\label{tab:scaling}
\footnotesize
\begin{tabular}{@{}rrrrrr@{}}
\toprule
 & \multicolumn{2}{c}{Repetition code ($n$)} & \multicolumn{3}{c}{Concatenated phase-flip code ($k$)} \\
\cmidrule(lr){2-3}\cmidrule(l){4-6}
$n$ / $k$ & Qubits & Median $\pm$ s.d. (ms) & Abstract qubits & Physical equiv. & Median $\pm$ s.d. (ms) \\
\midrule
2 & 3 & 0.763 $\pm$ 0.025 & -- & -- & -- \\
3 & 5 & 1.02 $\pm$ 0.041 & 5 & 11 & 1.15 $\pm$ 0.063 \\
4 & 7 & 1.21 $\pm$ 0.049 & -- & -- & -- \\
5 & 9 & 1.41 $\pm$ 0.049 & 9 & 19 & 1.63 $\pm$ 0.028 \\
7 & 13 & 1.80 $\pm$ 0.074 & 13 & 27 & 2.19 $\pm$ 0.095 \\
9 & 17 & 2.24 $\pm$ 0.048 & 17 & 35 & 2.76 $\pm$ 0.053 \\
11 & 21 & 2.59 $\pm$ 0.077 & 21 & 43 & 3.22 $\pm$ 0.044 \\
15 & 29 & 3.39 $\pm$ 0.093 & 29 & 59 & 4.08 $\pm$ 0.094 \\
21 & 41 & 4.89 $\pm$ 0.152 & 41 & 83 & 5.60 $\pm$ 0.091 \\
31 & 61 & 7.57 $\pm$ 0.177 & 61 & 123 & 8.66 $\pm$ 0.284 \\
41 & 81 & 10.4 $\pm$ 0.468 & 81 & 163 & 11.9 $\pm$ 0.316 \\
51 & 101 & 13.3 $\pm$ 0.351 & 101 & 203 & 15.8 $\pm$ 0.488 \\
63 & 125 & 17.5 $\pm$ 0.608 & 125 & 251 & 19.5 $\pm$ 0.665 \\
81 & 161 & 24.7 $\pm$ 0.866 & 161 & 323 & 27.6 $\pm$ 0.788 \\
101 & 201 & 32.8 $\pm$ 1.27 & 201 & 403 & 38.0 $\pm$ 1.57 \\
\bottomrule
\end{tabular}

\end{table}

\begin{figure}[!h]
\centering
\includegraphics[width=0.8\textwidth]{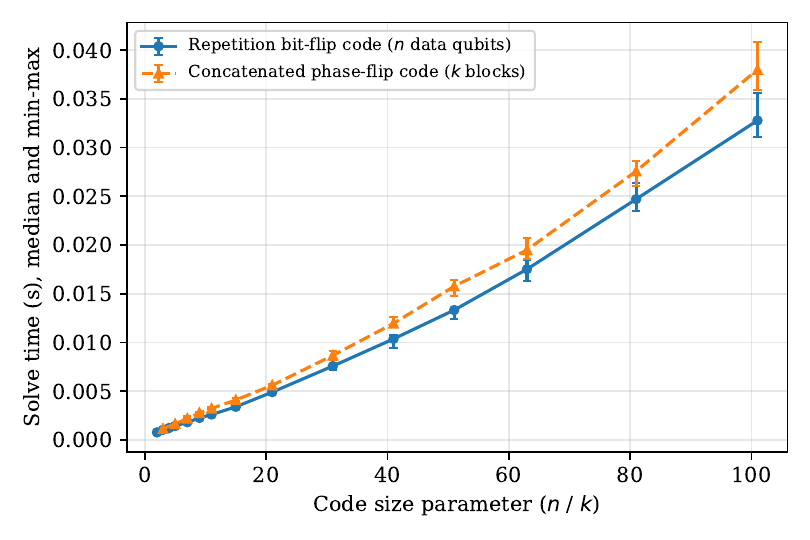}
\caption{Solve time as a function of code size for the generalised repetition and concatenated phase-flip codes (median; bars show minimum and maximum over 10 repetitions).}
\label{fig:codescaling}
\end{figure}

\FloatBarrier
\subsection{CSS syndrome extraction}\label{sec:rescss}
To test whether the abstraction also applies beyond repetition-based constructions,
we verified bit-flip and phase-flip syndrome extraction (Definition~\ref{def:css})
for four CSS code families with different parity-check structures:
the Steane code ($\mathbf{H}_X=\mathbf{H}_Z$, weight-4 checks);
the $[[15,7,3]]$ quantum Hamming code ($\mathbf{H}_X=\mathbf{H}_Z$, weight-8 checks);
the $[[15,1,3]]$ quantum Reed-Muller code ($\mathbf{H}_X\neq\mathbf{H}_Z$, 
with 4 $X$-type and 10 $Z$-type checks of weights 8 and 4, and distances $d_X=7$, $d_Z=3$);
and rotated surface codes ($\mathbf{H}_X\neq\mathbf{H}_Z$, 
with local weight-2 and weight-4 checks on a two-dimensional lattice and degenerate syndromes).

The matrices are generated programmatically.
The experiment checks the CSS condition~\eqref{eq:csscond}, 
the number of logical qubits and, for codes with $N\le25$, the distances by exhaustive enumeration before verification.

Tables~\ref{tab:steane} and~\ref{tab:surface} report the results for the single-error property (Properties~\ref{prop:steane} and~\ref{prop:css}) 
and the arbitrary-error property (Property~\ref{prop:anyerr}).
All syndrome-extraction circuits satisfy their properties.
The Steane circuits are verified in about 4.50ms.
The surface-code family was verified up to distance 101, corresponding to 15,301 data and ancilla qubits, in 160s (Figure~\ref{fig:surface}).
For surface codes with $d>11$, only the arbitrary-error property was verified.
This property implies the single-error property, while its cardinality constraint makes the latter slower for large $d$ (Section~\ref{sec:encoding}).

For this family, solve time grows faster than linearly with the number of qubits, from 0.278s at $d=21$ to 160s at $d=101$.
These results show that the abstraction can represent parity-check computations defined by arbitrary CSS check matrices, 
including cases with unequal $\mathbf{H}_X$ and $\mathbf{H}_Z$ and local two-dimensional check structures, 
without changing the underlying gate abstraction.

\begin{table}[!h]
\centering
\caption{Verification of CSS syndrome extraction for the Steane, quantum Hamming and quantum Reed-Muller codes. ``Checks'' is the number of rows of $\mathbf{H}_Z$ (bit-flip) or $\mathbf{H}_X$ (phase-flip). ``Arbitrary'' refers to Property~\ref{prop:anyerr}.}
\label{tab:steane}
\footnotesize
\begin{tabular}{@{}llrrlccr@{}}
\toprule
Code & Frame & $N$ & Checks & Errors & Verified & Qubits & Median $\pm$ s.d. (ms) \\
\midrule
Steane [[7,1,3]] & bit & 7 & 3 & $\le 1$ error & Yes & 10 & 4.39 $\pm$ 0.357 \\
Steane [[7,1,3]] & bit & 7 & 3 & arbitrary & Yes & 10 & 5.29 $\pm$ 0.757 \\
Steane [[7,1,3]] & phase & 7 & 3 & $\le 1$ error & Yes & 10 & 4.61 $\pm$ 0.369 \\
Steane [[7,1,3]] & phase & 7 & 3 & arbitrary & Yes & 10 & 4.38 $\pm$ 1.19 \\
Hamming [[15,7,3]] & bit & 15 & 4 & $\le 1$ error & Yes & 19 & 75.7 $\pm$ 26.0 \\
Hamming [[15,7,3]] & bit & 15 & 4 & arbitrary & Yes & 19 & 81.2 $\pm$ 53.7 \\
Hamming [[15,7,3]] & phase & 15 & 4 & $\le 1$ error & Yes & 19 & 58.2 $\pm$ 13.9 \\
Hamming [[15,7,3]] & phase & 15 & 4 & arbitrary & Yes & 19 & 84.0 $\pm$ 85.0 \\
Reed-Muller [[15,1,3]] & bit & 15 & 10 & $\le 1$ error & Yes & 25 & 30.1 $\pm$ 2.09 \\
Reed-Muller [[15,1,3]] & bit & 15 & 10 & arbitrary & Yes & 25 & 143 $\pm$ 60.2 \\
Reed-Muller [[15,1,3]] & phase & 15 & 4 & $\le 1$ error & Yes & 19 & 70.0 $\pm$ 21.4 \\
Reed-Muller [[15,1,3]] & phase & 15 & 4 & arbitrary & Yes & 19 & 59.6 $\pm$ 30.2 \\
\bottomrule
\end{tabular}

\end{table}

\begin{table}[!h]
\centering
\caption{Verification of syndrome extraction for rotated surface codes of distance $d$ ($d^2$ data qubits and $(d^2-1)/2$ ancillas for the checks of one frame). All instances are verified. ``Reps'' is the number of repetitions.}
\label{tab:surface}
\footnotesize
\begin{tabular}{@{}rrrrrrrc@{}}
\toprule
 & & & \multicolumn{2}{c}{$\le1$ error, median (s)} & \multicolumn{2}{c}{Arbitrary errors, median (s)} & \\
\cmidrule(lr){4-5}\cmidrule(lr){6-7}
$d$ & Data & Qubits & bit & phase & bit & phase & Reps \\
\midrule
3 & 9 & 13 & 0.0037 & 0.0044 & 0.0028 & 0.0032 & 10 \\
5 & 25 & 37 & 0.0162 & 0.0189 & 0.0078 & 0.0092 & 10 \\
7 & 49 & 73 & 0.0422 & 0.0464 & 0.0149 & 0.0165 & 10 \\
9 & 81 & 121 & 0.0998 & 0.104 & 0.0256 & 0.0285 & 10 \\
11 & 121 & 181 & 0.214 & 0.218 & 0.0415 & 0.0462 & 10 \\
15 & 225 & 337 & -- & -- & 0.0941 & 0.104 & 10 \\
21 & 441 & 661 & -- & -- & 0.260 & 0.278 & 10 \\
31 & 961 & 1{,}441 & -- & -- & 1.28 & 1.25 & 10 \\
51 & 2{,}601 & 3{,}901 & -- & -- & 9.51 & 10.5 & 10 \\
71 & 5{,}041 & 7{,}561 & -- & -- & 35.1 & 29.6 & 5 \\
101 & 10{,}201 & 15{,}301 & -- & -- & 160 & 154 & 3 \\
\bottomrule
\end{tabular}

\end{table}

\begin{figure}[!h]
\centering
\includegraphics[width=0.8\textwidth]{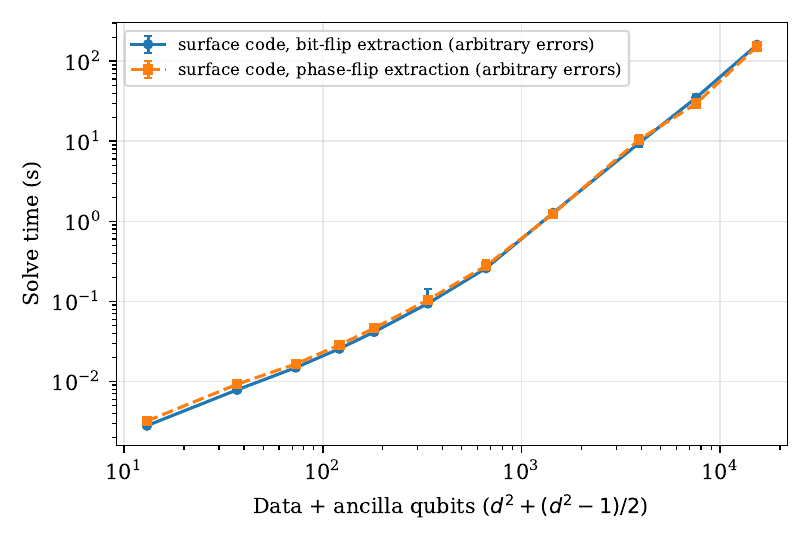}
\caption{Solve time of the arbitrary-error syndrome property for rotated surface codes (median; bars show minimum and maximum).}
\label{fig:surface}
\end{figure}

\FloatBarrier
\subsection{Multiple errors, detection, and recovery}
\label{sec:resmulti}
Table~\ref{tab:multi} reports Properties~\ref{prop:anyerr}-\ref{prop:recover} for the repetition codes and the three small CSS codes.
For each property, the table includes both the largest error weight for which the property is expected to hold and the next weight.
All 60 outcomes agree with coding theory.

Syndrome extraction is correct for every error pattern.
Detection holds up to weight $d-1$ ($n-1$ for the repetition code), 
while at weight $d$ the solver returns an undetectable error pattern.
For the Steane code, for example, this is a weight-3 pattern that corresponds to a logical operator.

Recovery holds up to weight $\lfloor(d-1)/2\rfloor$.
At the next weight, the solver returns an error pattern that the decoder miscorrects.
For example, for the Steane code, two bit flips can produce a syndrome equal to the column associated with a third qubit.
The quantum Reed-Muller code shows the expected asymmetry: bit-flip errors are detected up to weight~6, 
whereas phase-flip errors are detected only up to weight~2.

Recovery queries for the repetition code are substantially more expensive than the other queries
because the minimum-weight decoder compares error weights.
For $n=21$, the solve time is 69.2s.
In a preliminary run, $n=31$ did not finish within 120s.
Thus, verifying weight-based decoders at larger scales remains a limitation of the current encoding.
Look-up and syndrome-only properties remain inexpensive.

\begin{table}[!h]
\centering
\caption{Multi-error properties (median of 3 repetitions after a warm-up run). ``Violated'' rows are the expected failures beyond the capability of the code. The counterexample column lists the erroneous qubits of the pattern returned by the solver.}
\label{tab:multi}
\scriptsize
\begin{tabular}{@{}llrcr>{\raggedright\arraybackslash}p{0.2\textwidth}@{}}
\toprule
Circuit & Property & $w$ & Result & Median (ms) & Counterexample \\
\midrule
Repetition, $n=3$ & Prop.~\ref{prop:anyerr} & any & verified & 9.47 &  \\
Repetition, $n=3$ & Prop.~\ref{prop:detect} & 2 & verified & 7.64 &  \\
Repetition, $n=3$ & Prop.~\ref{prop:detect} & 3 & violated & 8.60 & $\{0,1,2\}$ \\
Repetition, $n=3$ & Prop.~\ref{prop:recover} & 1 & verified & 11.1 &  \\
Repetition, $n=3$ & Prop.~\ref{prop:recover} & 2 & violated & 10.5 & $\{1,2\}$ \\
Repetition, $n=5$ & Prop.~\ref{prop:anyerr} & any & verified & 10.8 &  \\
Repetition, $n=5$ & Prop.~\ref{prop:detect} & 4 & verified & 9.80 &  \\
Repetition, $n=5$ & Prop.~\ref{prop:detect} & 5 & violated & 9.64 & $\{0,\dots,4\}$ \\
Repetition, $n=5$ & Prop.~\ref{prop:recover} & 2 & verified & 12.1 &  \\
Repetition, $n=5$ & Prop.~\ref{prop:recover} & 3 & violated & 11.2 & $\{0,1,2\}$ \\
Repetition, $n=7$ & Prop.~\ref{prop:anyerr} & any & verified & 11.3 &  \\
Repetition, $n=7$ & Prop.~\ref{prop:detect} & 6 & verified & 9.22 &  \\
Repetition, $n=7$ & Prop.~\ref{prop:detect} & 7 & violated & 9.75 & $\{0,\dots,6\}$ \\
Repetition, $n=7$ & Prop.~\ref{prop:recover} & 3 & verified & 19.3 &  \\
Repetition, $n=7$ & Prop.~\ref{prop:recover} & 4 & violated & 13.3 & $\{0,1,3,4\}$ \\
Repetition, $n=9$ & Prop.~\ref{prop:anyerr} & any & verified & 11.2 &  \\
Repetition, $n=9$ & Prop.~\ref{prop:detect} & 8 & verified & 10.2 &  \\
Repetition, $n=9$ & Prop.~\ref{prop:detect} & 9 & violated & 10.3 & $\{0,\dots,8\}$ \\
Repetition, $n=9$ & Prop.~\ref{prop:recover} & 4 & verified & 47.4 &  \\
Repetition, $n=9$ & Prop.~\ref{prop:recover} & 5 & violated & 11.1 & $\{0,\dots,4\}$ \\
Repetition, $n=15$ & Prop.~\ref{prop:anyerr} & any & verified & 10.7 &  \\
Repetition, $n=15$ & Prop.~\ref{prop:detect} & 14 & verified & 10.7 &  \\
Repetition, $n=15$ & Prop.~\ref{prop:detect} & 15 & violated & 10.8 & $\{0,\dots,14\}$ \\
Repetition, $n=15$ & Prop.~\ref{prop:recover} & 7 & verified & 1,546 &  \\
Repetition, $n=15$ & Prop.~\ref{prop:recover} & 8 & violated & 16.0 & $\{0,\dots,7\}$ \\
Repetition, $n=21$ & Prop.~\ref{prop:anyerr} & any & verified & 12.6 &  \\
Repetition, $n=21$ & Prop.~\ref{prop:detect} & 20 & verified & 11.5 &  \\
Repetition, $n=21$ & Prop.~\ref{prop:detect} & 21 & violated & 11.6 & $\{0,\dots,20\}$ \\
Repetition, $n=21$ & Prop.~\ref{prop:recover} & 10 & verified & 69,204 &  \\
Repetition, $n=21$ & Prop.~\ref{prop:recover} & 11 & violated & 16.4 & $\{0,\dots,10\}$ \\
Steane, bit & Prop.~\ref{prop:anyerr} & any & verified & 11.4 &  \\
Steane, bit & Prop.~\ref{prop:detect} & 2 & verified & 14.9 &  \\
Steane, bit & Prop.~\ref{prop:detect} & 3 & violated & 14.2 & $\{1,3,5\}$ \\
Steane, bit & Prop.~\ref{prop:recover} & 1 & verified & 15.9 &  \\
Steane, bit & Prop.~\ref{prop:recover} & 2 & violated & 12.4 & $\{2,5\}$ \\
Steane, phase & Prop.~\ref{prop:anyerr} & any & verified & 13.9 &  \\
Steane, phase & Prop.~\ref{prop:detect} & 2 & verified & 17.6 &  \\
Steane, phase & Prop.~\ref{prop:detect} & 3 & violated & 14.9 & $\{1,3,5\}$ \\
Steane, phase & Prop.~\ref{prop:recover} & 1 & verified & 18.7 &  \\
Steane, phase & Prop.~\ref{prop:recover} & 2 & violated & 13.4 & $\{1,2\}$ \\
{}[[15,7,3]], bit & Prop.~\ref{prop:anyerr} & any & verified & 72.1 &  \\
{}[[15,7,3]], bit & Prop.~\ref{prop:detect} & 2 & verified & 80.3 &  \\
{}[[15,7,3]], bit & Prop.~\ref{prop:detect} & 3 & violated & 15.5 & $\{4,7,12\}$ \\
{}[[15,7,3]], bit & Prop.~\ref{prop:recover} & 1 & verified & 111 &  \\
{}[[15,7,3]], bit & Prop.~\ref{prop:recover} & 2 & violated & 14.5 & $\{12,14\}$ \\
{}[[15,7,3]], phase & Prop.~\ref{prop:anyerr} & any & verified & 117 &  \\
{}[[15,7,3]], phase & Prop.~\ref{prop:detect} & 2 & verified & 79.0 &  \\
{}[[15,7,3]], phase & Prop.~\ref{prop:detect} & 3 & violated & 17.9 & $\{3,7,11\}$ \\
{}[[15,7,3]], phase & Prop.~\ref{prop:recover} & 1 & verified & 144 &  \\
{}[[15,7,3]], phase & Prop.~\ref{prop:recover} & 2 & violated & 16.7 & $\{7,10\}$ \\
{}[[15,1,3]], bit & Prop.~\ref{prop:anyerr} & any & verified & 139 &  \\
{}[[15,1,3]], bit & Prop.~\ref{prop:detect} & 6 & verified & 49.3 &  \\
{}[[15,1,3]], bit & Prop.~\ref{prop:detect} & 7 & violated & 18.4 & $\{0,5,6,7,8,13,14\}$ \\
{}[[15,1,3]], bit & Prop.~\ref{prop:recover} & 1 & verified & 47.6 &  \\
{}[[15,1,3]], bit & Prop.~\ref{prop:recover} & 2 & violated & 16.5 & $\{2,10\}$ \\
{}[[15,1,3]], phase & Prop.~\ref{prop:anyerr} & any & verified & 193 &  \\
{}[[15,1,3]], phase & Prop.~\ref{prop:detect} & 2 & verified & 92.5 &  \\
{}[[15,1,3]], phase & Prop.~\ref{prop:detect} & 3 & violated & 16.7 & $\{0,7,8\}$ \\
{}[[15,1,3]], phase & Prop.~\ref{prop:recover} & 1 & verified & 112 &  \\
{}[[15,1,3]], phase & Prop.~\ref{prop:recover} & 2 & violated & 15.8 & $\{8,13\}$ \\
\bottomrule
\end{tabular}

\end{table}

\FloatBarrier
\subsection{Scalability for GHZ-state verification}\label{sec:resghz}
To evaluate scalability, we verified GHZ-state generation circuits ranging from 4 to 262,144 qubits using three topologies.
The \emph{star} topology used in~\cite{bib34} has all C-NOT gates controlled by the first qubit.
The \emph{chain} topology uses nearest-neighbour C-NOT gates with depth $m-1$, so the value of each qubit is propagated through all preceding qubits.
The \emph{tree} topology uses fan-out with logarithmic depth $\lceil\log_2 m\rceil$.

Selected results are given in Table~\ref{tab:ghz}, while Figure~\ref{fig:ghzscaling} summarises the overall scaling behaviour, including peak memory.
All instances are verified.
The largest star circuit, with 262,144 qubits, requires 9.01s of solve time, while the chain and tree circuits require 13.0s and 12.3s, respectively.

Across the evaluated range, solve time grows approximately linearly with circuit size
for all three topologies, consistent with the linear number of constraints discussed in Section~\ref{sec:ressize}.
The topology changes the solve time by less than 50\%, with the chain being the slowest and the star the fastest.

Compared with the largest GHZ-state experiment reported in~\cite{bib34}, 
which used 8,192 qubits, the present implementation verifies substantially larger circuits with fewer computational resources.
The ablation study in Section~\ref{sec:resablation} shows that this difference is not explained by the encoding pipeline or the solver version.
The comparison with~\cite{bib34} is therefore observational only, 
as discussed in Section~\ref{sec:disc-scal}.

For the largest instances, end-to-end time is dominated by Python-side construction of the constraint terms rather than by solving.
For $m=262,144$ in the star topology, construction takes 46.1s on average, compared with 9.01s of solve time.
Peak memory is also reached during term construction.
Measured in a fresh process for each instance, 
it grows linearly from the 47.0MiB baseline of the interpreter with Z3 loaded to 1,397MiB for 262,144 qubits in the star topology.
This corresponds to about 5.3KiB per qubit, which on the 16~GB machine would allow instances of about three million qubits before memory is exhausted.

\begin{table}[!h]
\centering
\caption{GHZ-state verification for three topologies (solve time: median $\pm$ s.d. over 10 repetitions, 5 for $m>16,384$; build: mean Python-side construction time; peak RSS: median of three fresh processes per instance, including the interpreter baseline).}
\label{tab:ghz}
\footnotesize
\begin{tabular}{@{}rrrrrrr@{}}
\toprule
 & \multicolumn{3}{c}{Solve time, median $\pm$ s.d. (s)} & Build (s) & \multicolumn{2}{c}{Peak RSS (MiB)} \\
\cmidrule(lr){2-4}\cmidrule(lr){6-7}
$m$ & star & chain & tree & star & star & tree \\
\midrule
4 & 0.0038 $\pm$ 0.0002 & 0.0032 $\pm$ 0.0003 & 0.0035 $\pm$ 0.0007 & 0.0007 & 50 & 50 \\
256 & 0.0102 $\pm$ 0.0002 & 0.0122 $\pm$ 0.0001 & 0.0139 $\pm$ 0.0004 & 0.0410 & 51 & 51 \\
4{,}096 & 0.0911 $\pm$ 0.0013 & 0.133 $\pm$ 0.0037 & 0.0985 $\pm$ 0.0021 & 0.675 & 60 & 61 \\
8{,}192 & 0.184 $\pm$ 0.0033 & 0.258 $\pm$ 0.0056 & 0.201 $\pm$ 0.0051 & 1.33 & 86 & 88 \\
16{,}384 & 0.364 $\pm$ 0.0062 & 0.468 $\pm$ 0.0169 & 0.415 $\pm$ 0.0061 & 2.69 & 137 & 142 \\
65{,}536 & 1.44 $\pm$ 0.0213 & 1.93 $\pm$ 0.0610 & 1.76 $\pm$ 0.0350 & 11.2 & 428 & 442 \\
131{,}072 & 3.14 $\pm$ 0.0805 & 4.58 $\pm$ 0.0555 & 3.70 $\pm$ 0.167 & 22.7 & 838 & 876 \\
262{,}144 & 9.01 $\pm$ 0.455 & 13.0 $\pm$ 0.667 & 12.3 $\pm$ 0.779 & 46.1 & 1,397 & 1,416 \\
\bottomrule
\end{tabular}

\end{table}

\begin{figure}[!h]
\centering
\includegraphics[width=\textwidth]{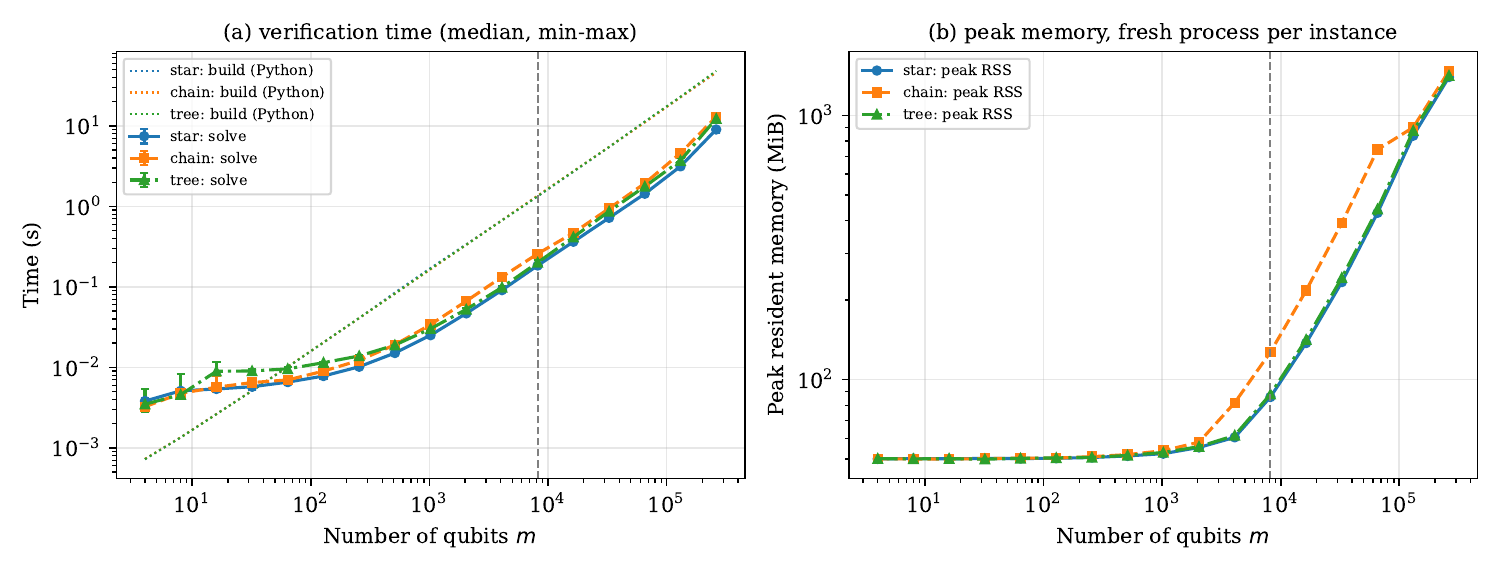}
\caption{GHZ-state verification versus qubit count for three topologies. (a) Solve time (median, minimum-maximum bars) and mean Python-side build time. (b) Peak resident memory of a fresh process per instance. The dashed line marks the largest GHZ circuit reported in~\cite{bib34}.}
\label{fig:ghzscaling}
\end{figure}

\FloatBarrier
\subsection{Encoding size}\label{sec:ressize}
Table~\ref{tab:size} reports the numbers of SMT variables, assertions and CNF clauses after bit-blasting for representative instances.
All three quantities grow linearly with the number of qubits, in agreement with the analysis in Section~\ref{sec:encoding}.
For GHZ circuits, the encoding produces about 6.5 clauses per qubit for the star topology and about 7.0 clauses per qubit for the chain and tree topologies.

\begin{table}[!h]
\centering
\caption{Size of the verification queries: SMT variables and assertions as generated, and CNF clauses after Z3's \texttt{simplify}, \texttt{card2bv}, \texttt{bit-blast} and \texttt{tseitin-cnf} tactics. Single-error properties for the code families, arbitrary-error property for the surface code (bit-flip frame).}
\label{tab:size}
\footnotesize
\begin{tabular}{@{}lrrrrr@{}}
\toprule
Family & Size & Qubits & SMT variables & Assertions & CNF clauses \\
\midrule
GHZ-star & 16 & 16 & 17 & 17 & 103 \\
GHZ-star & 64 & 64 & 65 & 65 & 419 \\
GHZ-star & 256 & 256 & 257 & 257 & 1{,}675 \\
GHZ-star & 1{,}024 & 1{,}024 & 1{,}025 & 1{,}025 & 6{,}669 \\
GHZ-star & 4{,}096 & 4{,}096 & 4{,}097 & 4{,}097 & 26{,}671 \\
GHZ-star & 16{,}384 & 16{,}384 & 16{,}385 & 16{,}385 & 106{,}574 \\
GHZ-chain & 16{,}384 & 16{,}384 & 16{,}385 & 16{,}385 & 114{,}684 \\
GHZ-tree & 16{,}384 & 16{,}384 & 16{,}385 & 16{,}385 & 114{,}678 \\
repetition & 3 & 5 & 13 & 11 & 83 \\
repetition & 9 & 17 & 43 & 35 & 320 \\
repetition & 21 & 41 & 103 & 83 & 844 \\
repetition & 51 & 101 & 253 & 203 & 2{,}183 \\
repetition & 101 & 201 & 503 & 403 & 4{,}470 \\
concatenated phase-flip & 3 & 5 & 19 & 17 & 197 \\
concatenated phase-flip & 9 & 17 & 61 & 53 & 662 \\
concatenated phase-flip & 21 & 41 & 145 & 125 & 1{,}642 \\
concatenated phase-flip & 51 & 101 & 355 & 305 & 4{,}121 \\
concatenated phase-flip & 101 & 201 & 705 & 605 & 8{,}291 \\
surface (bit, syndrome) & 3 & 13 & 43 & 30 & 271 \\
surface (bit, syndrome) & 5 & 37 & 127 & 90 & 863 \\
surface (bit, syndrome) & 9 & 121 & 427 & 306 & 3{,}043 \\
surface (bit, syndrome) & 15 & 337 & 1{,}207 & 870 & 8{,}803 \\
surface (bit, syndrome) & 21 & 661 & 2{,}383 & 1{,}722 & 17{,}551 \\
surface (bit, syndrome) & 31 & 1{,}441 & 5{,}223 & 3{,}782 & 38{,}771 \\
\bottomrule
\end{tabular}

\end{table}

\FloatBarrier
\subsection{Random H+C-NOT networks}\label{sec:resrandom}
GHZ, repetition and surface-code circuits produce sparse XOR relations, where each output depends on only a small number of variables.
To examine less favourable constraint structures,
we also verified random networks.
Each network applies Hadamard gates to a random half of the qubits as the first gate 
(branch qubits), followed by $D$ layers of C-NOT gates acting on random disjoint qubit pairs with random orientations, 
together with sparse X gates.

The property requires each output basis bit to equal the GF(2) 
affine function obtained independently from the gate list, 
and each tag to be $b'001$ exactly for the Hadamard qubits.
As $D$ increases, the number of input variables on which an output depends, 
referred to as its support, grows rapidly.

The results are shown in Table~\ref{tab:random} and Figure~\ref{fig:random}.
For shallow networks, verification remains fast even for thousands of qubits.
At a fixed width, however, solve time increases rapidly with depth.
For 32 qubits, the median solve time increases from 0.033s at depth~4 to 27s at depth~7, while at depth~8 two of the three seeds exceed the 300s timeout.
For 64 qubits, the median increases from 0.20s at depth~5 to 125s at depth~7.

\begin{table}[!h]
\centering
\caption{Random H+C-NOT networks of $n$ qubits and depth $D$ (three random seeds per configuration; 300\,s timeout). ``Max.\ support'' is the largest number of input variables on which an output depends.}
\label{tab:random}
\footnotesize
\begin{tabular}{@{}rrrrrl@{}}
\toprule
$n$ & Depth & C-NOTs & Max.\ support & Median time (s) & Result (3 seeds) \\
\midrule
16 & 1 & 8 & 2 & 0.0100 & 3$\times$ Yes \\
16 & 2 & 16 & 4 & 0.0201 & 3$\times$ Yes \\
16 & 4 & 32 & 10 & 0.0249 & 3$\times$ Yes \\
16 & 8 & 64 & 13 & 0.641 & 3$\times$ Yes \\
16 & 12 & 96 & 12 & 5.61 & 3$\times$ Yes \\
16 & 16 & 128 & 13 & 4.81 & 3$\times$ Yes \\
32 & 1 & 16 & 2 & 0.0120 & 3$\times$ Yes \\
32 & 2 & 32 & 4 & 0.0291 & 3$\times$ Yes \\
32 & 4 & 64 & 10 & 0.0329 & 3$\times$ Yes \\
32 & 6 & 96 & 17 & 1.38 & 3$\times$ Yes \\
32 & 7 & 112 & 19 & 27.5 & 3$\times$ Yes \\
32 & 8 & 128 & 19 & 300 & 2$\times$ timeout, 1$\times$ Yes \\
64 & 1 & 32 & 2 & 0.0106 & 3$\times$ Yes \\
64 & 2 & 64 & 4 & 0.0231 & 3$\times$ Yes \\
64 & 4 & 128 & 13 & 0.0417 & 3$\times$ Yes \\
64 & 5 & 160 & 16 & 0.205 & 3$\times$ Yes \\
64 & 6 & 192 & 22 & 4.61 & 3$\times$ Yes \\
64 & 7 & 224 & 32 & 125 & 3$\times$ Yes \\
256 & 1 & 128 & 2 & 0.0160 & 3$\times$ Yes \\
256 & 2 & 256 & 4 & 0.0392 & 3$\times$ Yes \\
256 & 3 & 384 & 8 & 0.0615 & 3$\times$ Yes \\
256 & 4 & 512 & 16 & 0.159 & 3$\times$ Yes \\
256 & 5 & 640 & 26 & 1.35 & 3$\times$ Yes \\
1{,}024 & 1 & 512 & 2 & 0.0443 & 3$\times$ Yes \\
1{,}024 & 2 & 1{,}024 & 4 & 0.103 & 3$\times$ Yes \\
1{,}024 & 3 & 1{,}536 & 8 & 0.294 & 3$\times$ Yes \\
4{,}096 & 1 & 2{,}048 & 2 & 0.170 & 3$\times$ Yes \\
4{,}096 & 2 & 4{,}096 & 4 & 0.458 & 3$\times$ Yes \\
\bottomrule
\end{tabular}

\end{table}

\begin{figure}[!h]
\centering
\includegraphics[width=0.8\textwidth]{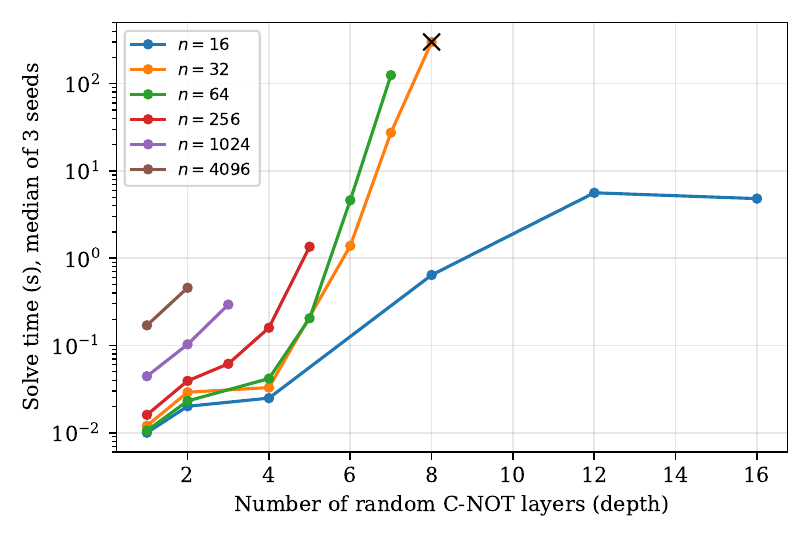}
\caption{Solve time of random H+C-NOT networks versus depth (median of three seeds; crosses mark configurations with at least one timeout).}
\label{fig:random}
\end{figure}

The circuit size is small in all of these cases.
The main difficulty is proving that two different XOR decompositions represent the same parity function, 
which is known to be difficult for resolution-based solvers~\cite{bib55}.
These results show that the favourable scaling observed for GHZ and code circuits 
depends on their sparse constraint structure rather than applying to the abstraction in general, 
as discussed in Section~\ref{sec:disc-scal}.

\FloatBarrier
\subsection{Controlled ablation of encoding pipeline and solver version}
\label{sec:resablation}
To identify the factors contributing to the difference with~\cite{bib34}, 
we compared the direct API pipeline with an SMT-LIB pipeline 
on the same machine and for identical constraint systems using GHZ star circuits.
In the SMT-LIB pipeline, the query is written to an SMT-LIB~2 file and then parsed into a fresh solver before solving.
The same file is also solved using the stand-alone \texttt{z3} executable of the same release in a separate process.
We also compared two Z3 releases.

Table~\ref{tab:ablation} shows that, for identical constraints, 
the SMT-LIB pipeline is \emph{not} slower than the direct API.
Serialisation and parsing add a cost that grows with query size, 
with 4.1s and 1.9s for 262,144 qubits and a 50MB file.
However, the stand-alone \texttt{z3} executable solves the SMT-LIB file, 
including start-up and parsing, in 5.0s, whereas solving the API-built query takes 8.8s.
The same ordering is observed at all tested sizes.

The measured difference between the two solve times appears to result 
from differences in the term structure of the two inputs: 
the API query contains one large conjunction, whereas the parsed file contains individual assertions.
This was not investigated further.
Thus, the direct API encoding does \emph{not} explain the improvement over~\cite{bib34}, 
and is treated as an observation only.
Its practical advantage is instead that it avoids intermediate files and supports the programmatic construction of circuit families, mutants and miters, rather than providing faster solving.

\begin{table}[!h]
\centering
\caption{Direct API versus SMT-LIB pipeline for identical GHZ-star queries (median of three repetitions).}
\label{tab:ablation}
\footnotesize
\begin{tabular}{@{}lrrrrrrr@{}}
\toprule
 & & Direct API & \multicolumn{4}{c}{SMT-LIB pipeline, in process (s)} & External \\
\cmidrule(lr){4-7}
Z3 & $m$ & solve (s) & serialise & parse & solve & total & \texttt{z3} (s) \\
\midrule
4.13.4 & 1{,}024 & 0.0174 & 0.0155 & 0.0049 & 0.0035 & 0.0242 & 0.0176 \\
4.13.4 & 4{,}096 & 0.0668 & 0.0608 & 0.0204 & 0.0140 & 0.0965 & 0.0433 \\
4.13.4 & 16{,}384 & 0.272 & 0.245 & 0.0949 & 0.0688 & 0.410 & 0.171 \\
4.13.4 & 65{,}536 & 1.19 & 1.00 & 0.435 & 0.377 & 1.84 & 0.769 \\
4.13.4 & 262{,}144 & 8.80 & 4.11 & 1.93 & 5.45 & 11.4 & 5.05 \\
5.1.0 & 1{,}024 & 0.0181 & 0.0159 & 0.0052 & 0.0037 & 0.0252 & 0.0182 \\
5.1.0 & 4{,}096 & 0.0688 & 0.0620 & 0.0206 & 0.0147 & 0.0982 & 0.0452 \\
5.1.0 & 16{,}384 & 0.281 & 0.249 & 0.0958 & 0.0628 & 0.413 & 0.175 \\
5.1.0 & 65{,}536 & 1.19 & 1.02 & 0.424 & 0.349 & 1.79 & 0.779 \\
5.1.0 & 262{,}144 & 8.47 & 4.19 & 1.87 & 5.01 & 11.1 & 5.01 \\
\bottomrule
\end{tabular}

\end{table}

\begin{table}[!h]
\centering
\caption{Solve time of GHZ-star queries for two Z3 releases on the same machine (median $\pm$ s.d.).}
\label{tab:version}
\footnotesize
\begin{tabular}{@{}rrr@{}}
\toprule
$m$ & Z3 4.13.4 (s) & Z3 5.1.0 (s) \\
\midrule
4{,}096 & 0.0911 $\pm$ 0.0013 & 0.0700 $\pm$ 0.0010 \\
16{,}384 & 0.364 $\pm$ 0.0062 & 0.294 $\pm$ 0.0058 \\
65{,}536 & 1.44 $\pm$ 0.0213 & 1.22 $\pm$ 0.0221 \\
131{,}072 & 3.14 $\pm$ 0.0805 & 3.13 $\pm$ 0.0950 \\
262{,}144 & 9.01 $\pm$ 0.455 & 9.01 $\pm$ 0.199 \\
\bottomrule
\end{tabular}

\end{table}

Table~\ref{tab:version} compares the solve times of Z3~4.13.4 and Z3~5.1.0.
The two versions are within about 25\% of each other at all tested sizes and give identical times at the largest size.
The solver version is therefore not a significant factor in the observed difference.

These ablation experiments isolate the encoding pipeline and solver version on the same hardware.
They do not reproduce the implementation used in~\cite{bib34}, 
since its SMT-LIB generator, the precise form of the generated constraints and the hardware all differ.
The source of the difference from~\cite{bib34} therefore remains unresolved, and the comparison is purely observational.

\FloatBarrier
\subsection{Differential validation of the soundness theorem}
\label{sec:ressound}
To provide an independent check of Theorem~\ref{thm:soundness}, we compared the abstraction with exact state-vector simulation.
For random circuits in $\mathcal{F}$ on 2-5 qubits, we used random qubit roles, 
random X and C-NOT gates, and random frame segments containing random phase-flip errors.
The deep set contained up to eight frame segments per qubit.
For every computational-basis input, we checked Eq.~\eqref{eq:soundness}.
Specifically, we verified that the support of the output state matched the set of basis states
predicted by the abstraction over all branch assignments, 
that all amplitudes had magnitude $2^{-|B|/2}$, and that the odd tags occurred exactly on the branch qubits.

\begin{table}[!h]
\centering
\caption{Differential test of the abstraction against state-vector simulation.}
\label{tab:sound}
\footnotesize
\begin{tabular}{@{}llrr@{}}
\toprule
Random circuits & $n$ & Circuits & Agree on all inputs \\
\midrule
In $\mathcal{F}$ & 2 & 2000 & 2000 \\
In $\mathcal{F}$ & 3 & 2000 & 2000 \\
In $\mathcal{F}$ & 4 & 2000 & 2000 \\
In $\mathcal{F}$ & 5 & 2000 & 2000 \\
In $\mathcal{F}$, deep (639 saturate) & 2 & 2000 & 2000 \\
In $\mathcal{F}$, deep (889 saturate) & 3 & 2000 & 2000 \\
Unrestricted & 2 & 2000 & 1108 \\
Unrestricted & 3 & 2000 & 1073 \\
Unrestricted & 4 & 2000 & 1120 \\
\bottomrule
\end{tabular}

\end{table}

Table~\ref{tab:sound} shows that all 12,000 circuits in $\mathcal{F}$ agreed for all inputs, including 1,528 circuits in which the tag saturated.
For unrestricted random circuits over $\{H,X,\text{C-NOT}\}$, 
only about 55\% agreed, and each of the counterexamples C1-C4 from Remark~\ref{rem:cex} produced a disagreement.
These results support the soundness theorem on its stated fragment and show the need for the restriction to $\mathcal{F}$.

\FloatBarrier
\subsection{Mutation-based validation}\label{sec:resmutation}
The proposed correctness properties were also evaluated using the mutation-testing framework described in Section~\ref{sec:mutationmethod}.
Applying every operator in Table~\ref{tab:mutops} at every valid location of every instance generated 682 mutants across 12 circuit families.
A further 12 mutations had no valid location and were therefore not generated.

The SMT miter identified 60 of the generated mutants as equivalent.
These were exactly the \texttt{reorder\_cnot} mutants, 
which reverse the order of commuting C-NOT gates, and no other mutants.
They were excluded from the fault-detection analysis.
\begin{table}[!h]
\centering
\caption{Mutation results per circuit family.}
\label{tab:mutfam}
\footnotesize
\begin{tabular}{@{}lrrrrrr@{}}
\toprule
Circuit family & Generated & Not gen. & Equivalent & Evaluated & Detected & Rate \\
\midrule
repetition & 140 & 0 & 20 & 120 & 120 & 100.0\% \\
concatenated phase-flip & 129 & 0 & 12 & 117 & 117 & 100.0\% \\
Steane bit-flip & 21 & 0 & 3 & 18 & 18 & 100.0\% \\
Steane phase-flip & 42 & 0 & 3 & 39 & 39 & 100.0\% \\
surface d=3 bit-flip & 28 & 0 & 4 & 24 & 24 & 100.0\% \\
surface d=3 phase-flip & 55 & 0 & 4 & 51 & 51 & 100.0\% \\
Reed-Muller [[15,1,3]] bit-flip & 70 & 0 & 10 & 60 & 60 & 100.0\% \\
Reed-Muller [[15,1,3]] phase-flip & 73 & 0 & 4 & 69 & 69 & 100.0\% \\
Bell & 4 & 0 & 0 & 4 & 4 & 100.0\% \\
GHZ star & 40 & 4 & 0 & 40 & 40 & 100.0\% \\
GHZ chain & 40 & 4 & 0 & 40 & 40 & 100.0\% \\
GHZ tree & 40 & 4 & 0 & 40 & 40 & 100.0\% \\
\midrule
\textbf{Total} & 682 & 12 & 60 & 622 & 622 & 100.0\% \\
\bottomrule
\end{tabular}

\end{table}

\begin{table}[!h]
\centering
\caption{Mutation results per operator (all families).}
\label{tab:mutop}
\footnotesize
\begin{tabular}{@{}lrrrrrr@{}}
\toprule
Operator & Generated & Not gen. & Equivalent & Evaluated & Detected & Rate \\
\midrule
\texttt{drop\_cnot} & 60 & 0 & 0 & 60 & 60 & 100.0\% \\
\texttt{wrong\_wire} & 60 & 0 & 0 & 60 & 60 & 100.0\% \\
\texttt{extra\_cnot} & 60 & 0 & 0 & 60 & 60 & 100.0\% \\
\texttt{swapped\_cnot} & 97 & 0 & 0 & 97 & 97 & 100.0\% \\
\texttt{reorder\_cnot} & 60 & 0 & 60 & 0 & 0 & -- \\
\texttt{mis\_init} & 60 & 0 & 0 & 60 & 60 & 100.0\% \\
\texttt{extra\_h\_anc} & 60 & 0 & 0 & 60 & 60 & 100.0\% \\
\texttt{missing\_h1} & 46 & 0 & 0 & 46 & 46 & 100.0\% \\
\texttt{missing\_h2} & 46 & 0 & 0 & 46 & 46 & 100.0\% \\
\texttt{extra\_h} & 59 & 0 & 0 & 59 & 59 & 100.0\% \\
\texttt{missing\_h} & 13 & 0 & 0 & 13 & 13 & 100.0\% \\
\texttt{missing\_cnot} & 37 & 0 & 0 & 37 & 37 & 100.0\% \\
\texttt{wrong\_target} & 24 & 12 & 0 & 24 & 24 & 100.0\% \\
\midrule
\textbf{Total} & 682 & 12 & 60 & 622 & 622 & 100.0\% \\
\bottomrule
\end{tabular}

\end{table}

\begin{figure}[!h]
\centering
\includegraphics[width=0.9\textwidth]{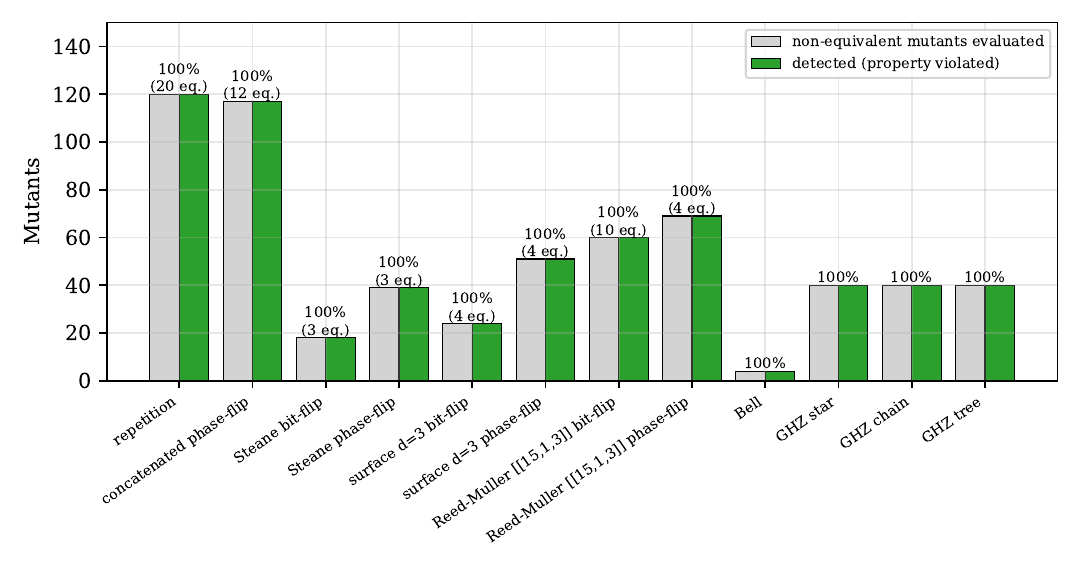}
\caption{Non-equivalent mutants evaluated and detected per circuit family.}
\label{fig:mutation}
\end{figure}

Tables~\ref{tab:mutfam} and~\ref{tab:mutop} summarise the results for the remaining 622 non-equivalent mutants, 
and Figure~\ref{fig:mutation} shows the detection rates.
Every non-equivalent mutant produced a property violation that was detected by the SMT solver, giving a 100\% detection rate for every circuit family and every operator.

These results are consistent with Lemmas~\ref{lem:1}-\ref{lem:6} 
and provide empirical evidence that the \emph{implemented} 
properties detect the implementation fault classes considered in this study.
The 100\% detection rate applies only to the evaluated mutation 
classes and circuit instances and does not imply complete fault coverage (Section~\ref{sec:discussion}).

\FloatBarrier
\subsection{Discussion of results}
The experiments show that the proposed abstraction verifies all benchmark circuits 
considered in this study while remaining computationally efficient 
for circuits with sparse XOR structure.
The generalised repetition and concatenated phase-flip experiments show that the correctness properties extend to parametrised circuit families.
The CSS experiments show that the abstraction can be used for syndrome extraction with arbitrary CSS check matrices.
The multiple-error experiments distinguish detection, 
identification and recovery and confirm the expected limits of the codes considered.

The GHZ experiments show that the implementation scales to circuits with more than 
$2.6\times10^5$ qubits for all three evaluated topologies under the proposed abstraction.
The random-network and recovery experiments also show the limits of this scaling.
The differential test provides an independent check of the soundness theorem on its stated fragment and shows its failure outside that fragment.
The mutation study provides empirical evidence that the implemented correctness properties detect the fault classes considered in Section~\ref{sec:proofs}.

Taken together, the results show that the proposed verification framework can handle the circuit classes studied here at substantial scale,
while also identifying constraint structures and decoder properties that lead to higher verification costs.

\section{Discussion: Scope, Limitations, and Threats to Validity}
\label{sec:discussion}

This section discusses the scope of the proposed abstraction, 
its current limitations, and the main threats to the validity of the experimental evaluation.

\subsection{Scope of the abstraction}\label{sec:scope}
The abstraction introduced in Section~\ref{sec:absfrmwrk}, together with the soundness result in Theorem~\ref{thm:soundness},
applies to circuits in the fragment $\mathcal{F}$ defined in Definition~\ref{def:fragment}. 
This fragment is a proper subset of the circuits composed of $H$, $X$ and C-NOT gates, 
and therefore the method is not a verification technique for general stabilizer circuits.

The fragment is sufficient for all circuit families studied in this work, 
including repetition and concatenated codes, Bell and GHZ state preparation, random H+C-NOT networks, and CSS syndrome extraction. 
This includes CSS codes with $\mathbf{H}_X\neq\mathbf{H}_Z$, such as the $[[15,1,3]]$ 
quantum Reed-Muller code and the surface codes. 
Within these circuits, phase-flip errors are represented using the frame interpretation described in Section~\ref{sec:baseabs}, 
rather than by an explicit $Z$ gate. 
A physical bit-flip fault occurring \emph{inside} a Hadamard frame segment is therefore outside the model. 
As shown by counterexample C1 in Remark~\ref{rem:cex}, the abstraction would interpret such a fault as a phase flip.

The abstraction also does not directly cover stabilizer codes whose generators contain intrinsic $Y$-type Pauli operators 
that cannot be represented as purely $X$-type or purely $Z$-type operators on the same support. 
A representative example is the five-qubit perfect code~\cite{bib46}, 
whose generators $XZZXI$ and their cyclic shifts contain both $X$ and $Z$ 
factors on overlapping qubit sets, so that products of generators can contain $Y$ factors. 
Its stabilizer group therefore cannot be generated entirely from separate $X$-type and $Z$-type elements and has no CSS decomposition~\cite{bib47}. 
Supporting such codes would require additional gate semantics, in particular an abstract phase ($S$) gate, together with a representation of the relative phases $\omega$ in Eq.~\eqref{eq:soundness}, 
rather than only the current superposition status (Section~\ref{sec8}).

\paragraph{Circuit-level verification versus physical validation.}
The framework verifies logical properties of an abstract circuit description. 
These include which syndrome is computed, whether gates are connected as intended, and whether the expected entanglement structure is produced.
It does not model hardware noise, decoherence, leakage, measurement errors, gate infidelities, crosstalk, timing effects, or other physical-device behaviour.
In this work, the terms \emph{fault} and \emph{mutant} refer to design or implementation errors in the circuit description, 
whereas \emph{errors} denote the discrete Pauli bit-flip and phase-flip patterns that the code is intended to handle.
Establishing that a physical device correctly implements the verified circuit, 
or determining its logical error rate under a realistic noise model, 
requires complementary methods such as noise-aware simulation~\cite{bib41} and experimental characterisation.

\subsection{Detection, correction, and fault tolerance}\label{sec:faulttol}
Section~\ref{sec:detcorr} distinguishes detection, identification, decoding, 
recovery and fault tolerance.
The properties of~\cite{bib34} and Properties~\ref{prop:genrep}-\ref{prop:css} verify correct syndrome extraction and error identification under a single error.
Property~\ref{prop:anyerr} extends syndrome verification to arbitrary error patterns, 
while Property~\ref{prop:detect} verifies detection up to a specified error weight.
Property~\ref{prop:recover} additionally verifies that a specified decoder, 
together with classically controlled corrections, returns the data to the error-free code word for all errors up to weight $w$.
For the generalised repetition code with $n>3$, the single-error properties establish identification only. 
Correction of multiple errors is established by Property~\ref{prop:recover} for the values of $n$ and $w$ that were verified in Section~\ref{sec:resmulti}.

These properties do not establish \emph{fault tolerance}.
In all verified circuits, the error variables act on the data qubits \emph{before} 
syndrome extraction, while the extraction circuit itself is assumed to be fault-free.
Faults affecting ancillas or C-NOT gates during extraction are not modelled, 
and neither is their propagation.

In particular, a phase fault on an ancilla that is the target of several C-NOT gates can propagate to the control (data) qubits through phase kickback.
Similarly, an ancilla prepared in $|+\rangle$ that controls several 
C-NOT gates can propagate an $X$ fault to subsequent targets.
Either mechanism can convert a single fault into a correlated multi-qubit data error, 
commonly referred to as a hook error~\cite{bib49}.
The present abstraction cannot represent these mechanisms.
Phase kickback is a phase effect that is not captured by Definition~\ref{def:cnot}.
The second mechanism requires a qubit that has acted as a C-NOT control while in superposition to receive a second Hadamard gate, 
which is excluded from $\mathcal{F}$, as shown by counterexample C2.

Consequently, for the Steane code, the framework verifies the classical Hamming$(7,4)$ 
syndrome logic of the ancilla-based extraction circuit in both Pauli frames.
It does not verify the fault tolerance of the complete extraction circuit.
Such verification would require, for example, modelling verified ancilla preparation, 
repeated extraction or flag qubits~\cite{bib6,bib48}.
More generally, fault-tolerance verification would require error variables at each location 
in the extraction circuit and an error-propagation semantics 
that includes phase kickback, as outlined in the phase-aware extension in Section~\ref{sec8}.

\subsection{Error model and preconditions}\label{sec:errmodel}
The single-error properties assume that at most one error is present.
For the Steane and CSS properties, 
they also assume that the input is a valid code word, 
namely $\mathbf{H}_Zv=0$ or $\mathbf{H}_Xv=0$, respectively.
This code-word precondition is part of the specification of syndrome extraction: 
the syndrome of a non-code word is not an error syndrome.
Producing a valid code word is therefore the responsibility of the encoding circuit, 
which can be verified separately and then composed with the extraction circuit.

The single-error assumption, in contrast, is not intrinsic to syndrome extraction.
Property~\ref{prop:anyerr} removes this restriction and verifies syndrome extraction for every error pattern.
Properties~\ref{prop:detect} and~\ref{prop:recover} then specify the error weights up to which detection and recovery are required to hold.
The current model does not include errors that occur during syndrome extraction or recovery (Section~\ref{sec:faulttol}), 
measurement errors, or repeated rounds of syndrome extraction.

\subsection{Coverage of the mutation analysis}
The mutation study in Section~\ref{sec:resmutation} focuses on implementation faults directly covered by Lemmas~\ref{lem:1}-\ref{lem:6}.
These include incorrect ancilla initialisation, omitted gates, 
additional gates and incorrect C-NOT wiring.
Each mutant contains a single injected modification.
Although combinations of modifications are covered by the characterisation in Lemma~\ref{lem:6}, they were not generated systematically in the mutation experiment.

Other fault classes were not considered.
These include the insertion of gates outside the supported $\{H,X,\mathrm{C\mbox{-}NOT}\}$ set,
such as $Z$ or $S$ gates, timing-related faults in physical implementations, 
and systematically generated combinations of multiple simultaneous faults.

The reported 100\% detection rate should therefore be interpreted narrowly.
It means that every non-equivalent single-modification mutant from the listed operator classes, 
for the listed circuit instances, violated its corresponding property.
It is not evidence of complete fault coverage or of detection of physically realistic fault classes.
Its main purpose is to show that the implemented encodings behave as predicted by Lemmas~\ref{lem:1}-\ref{lem:6}, as discussed in Section~\ref{sec:mutationmethod}.

\subsection{Interpretation of the scalability results}\label{sec:disc-scal}
The GHZ-state experiments show substantially improved verification performance compared with the implementation reported in~\cite{bib34}.
This comparison requires some care for three reasons.

First, the GHZ, repetition, concatenated and surface-code queries contain sparse linear XOR relationships 
in which each output depends on only a small number of variables.
These structures are well suited to SMT solving, 
and their solve times grow approximately linearly over the measured range.
The reported scale is therefore a property of these circuit families under the proposed abstraction.
It should not be interpreted as evidence that arbitrary circuits with $2.6\times10^5$ qubits can be verified.

The random H+C-NOT experiments in Section~\ref{sec:resrandom} show the other side of this result.
When outputs depend on many input variables, 
proving equality between the XOR structure of the circuit 
and that of the specification becomes substantially harder.
At fixed width, solve times therefore increase rapidly with circuit depth.
This behaviour is consistent with the known difficulty of parity reasoning for resolution-based solvers~\cite{bib55}.
Solvers with native XOR reasoning~\cite{bib54}, or approaches based on Gaussian elimination, 
would be natural alternatives for these constraints because the underlying parity problems are decidable in polynomial time by linear algebra.

Second, the reported scale should be considered in the context of stabilizer simulation~\cite{bib40,bib41}.
GHZ preparation is a Clifford circuit and can be simulated in polynomial 
time using tableau-based methods.
The SMT queries address a different task: they verify a property for all admissible inputs and error patterns simultaneously, or return a counterexample.
The results therefore measure the ability of the solver to handle the 
linear bit-vector constraints produced by the abstraction, 
rather than general symbolic reasoning over high-dimensional Hilbert spaces.

Third, the comparison with~\cite{bib34} is not a controlled experiment because the original implementation was not re-executed under identical conditions and the hardware differs.
Section~\ref{sec:resablation} therefore isolates the factors that could be controlled.
On the same hardware and for identical constraints, 
it compares the direct API pipeline with an SMT-LIB serialisation pipeline 
and also compares two Z3 releases.
Neither factor explains the observed improvement.
In particular, the stand-alone solver is faster on the SMT-LIB file than solving the API-built query, and the two Z3 releases have similar performance.
The remaining differences, including the hardware and the details of the constraint generator used in~\cite{bib34}, could not be isolated.
The comparison with~\cite{bib34} should consequently be regarded as observational rather than controlled.

Finally, beyond 262,144 qubits, the main resource limitation is Python-side construction and memory rather than solver time (Section~\ref{sec:resghz}).
At the largest size, construction of the Python term graph takes about five times longer than solving, 46s compared with 9s.
Peak memory is also reached during construction and grows approximately linearly at around 5.3KiB per qubit.
On the 16~GB machine used in the experiments, this corresponds to a limit of approximately three million qubits before the available memory is exhausted.

\subsection{Threats to validity of the soundness argument}
Theorem~\ref{thm:soundness} is proved for the fragment $\mathcal{F}$, 
and every benchmark circuit used in the experiments belongs to this fragment by construction.
The theorem does not impose a bound on the number of Hadamard gates applied to a qubit.
A frame qubit may pass through any number of frame segments, 
corresponding to repeated encode-decode cycles, because the saturating tag preserves parity (Definition~\ref{def:h}).

The theorem does require that, between two Hadamard gates, 
a qubit participates in no gate other than phase-flip error injections.
It also requires that a qubit which has interacted with another qubit does not receive a further Hadamard gate.
The need for these conditions is illustrated by counterexamples C1-C4.
The differential test in Section~\ref{sec:ressound} further supports the theorem, 
including tests with random circuits containing up to eight frame segments per qubit and therefore exercising tag saturation.

Circuits outside $\mathcal{F}$, including circuits that apply Hadamard gates to entangled qubits,
circuits containing $S$ or $Z$ gates, 
or circuits in which data qubits in superposition act as C-NOT controls between Hadamard gates,
require a richer abstract state.
These cases are left for future work.

\subsection{Applicability to CSS codes}\label{sec:disc-css}
The Steane experiment demonstrated the feasibility of the approach 
for one CSS construction in which the same Hamming matrix defines both check types.
The evaluation extends this to a code with $\mathbf{H}_X=\mathbf{H}_Z$ and higher check weight,
namely the $[[15,7,3]]$ code; a code with $\mathbf{H}_X\neq\mathbf{H}_Z$ and asymmetric distances,
namely the $[[15,1,3]]$ quantum Reed-Muller code with $d_X=7$ and $d_Z=3$; 
and a family with a two-dimensional local check structure and degenerate syndromes, 
namely the rotated surface codes.

All of these cases are handled by Definition~\ref{def:css} without modifying the underlying gate abstraction.
This supports the use of the framework for CSS syndrome extraction because the corresponding property is defined by the pair $(\mathbf{H}_X,\mathbf{H}_Z)$.
The scope of this result is limited to syndrome extraction, detection, and look-up or minimum-weight recovery.
It does not extend to fault-tolerant implementations or to decoders for degenerate codes, 
such as minimum-weight matching for surface codes, whose verification is left for future work.

\section{Conclusion and Future Directions}\label{sec8}

This paper extends the bit-vector abstraction proposed in~\cite{bib34} for the formal verification of quantum error-detection and entanglement circuits in several directions. 
First, the original correctness properties are generalised from fixed-size examples to parametrised repetition and concatenated code families.
Second, the abstraction is applied to the Steane $[[7,1,3]]$ CSS code, demonstrating that the framework is applicable to a representative quantum error-correcting code beyond repetition-based constructions.
Third, a soundness theorem is established for the abstraction within its supported gate set, and complete proofs are provided for the six lemmas that were stated without proof in the original work. 
Fourth, these theoretical results are supported by a mutation-based evaluation comprising 120 circuit mutants, all of which were successfully detected by the proposed verification properties. 
Finally, the verification framework is reimplemented using a direct Z3 API encoding, 
enabling GHZ-state verification at substantially larger scales than previously reported while using considerably fewer computational resources.

Taken together, these results indicate that the proposed abstraction is not limited to the fixed-size examples considered in the original study.
The generalised correctness properties, the verification of the Steane code, and the accompanying theoretical and empirical analyses demonstrate that the framework can be applied to a broader class of quantum circuits while maintaining efficient SMT-based verification.

Several directions remain for future research. 
The first is to extend the abstraction beyond the current $\{H,X,\mathrm{C\mbox{-}NOT}\}$ gate set 
to support stabilizer codes containing genuine $Y$-type operators, such as the five-qubit perfect code. 
This will require enriching the abstract state representation to capture relative phase information. 
A second direction is to move beyond the current single-error assumption by modelling multiple simultaneous errors together with the corresponding recovery operations, thereby enabling end-to-end verification of complete fault-tolerant protocols. 
Finally, the mutation-based evaluation framework presented in this work could be extended and adopted more broadly as a complementary validation methodology for quantum circuit verification studies, providing empirical evidence for the fault-detection capability of proposed correctness properties.

Overall, the experimental results demonstrate that the proposed SMT-based verification framework remains computationally efficient across the circuit families investigated, 
while the theoretical analysis establishes the correctness guarantees of the abstraction within its intended scope. 
These findings further support the use of SMT solving as a practical approach for the formal verification of quantum circuits constructed from the supported gate set.

\bibliographystyle{unsrtnat}
\bibliography{ref}

\end{document}